\newtheorem{statement}{Statement}
\newcommand{\commentout}[1]{}
\newcommand{\ie}{i.\,e.\@ifnextchar{,}{}{~}}
\newcommand{\eg}{e.\,g.\@ifnextchar{,}{}{~}}
\newcommand{\diam}{\mathop{\rm diam} }
\title{Fast approximation of centrality and distances in hyperbolic graphs}
\author{
   Victor Chepoi\inst{1} \and
   Feodor F. Dragan\inst{2} \and
   Michel Habib\inst{3}\and
   Yann Vax\`es\inst{1} \and
   Hend Al-Rasheed\inst{2}
}
\institute{
Laboratoire d'Informatique et Syst\`emes,\\
Aix-Marseille Univ, CNRS, and Univ. de Toulon\\
Facult\'e des Sciences de Luminy,
F-13288 Marseille Cedex 9, France \\
\email{\{victor.chepoi, yann.vaxes\}@lif.univ-mrs.fr}
\and
    Algorithmic Research Laboratory,
    Department of Computer Science, \\
    Kent State University,
    Kent, Ohio, USA  \\
    \email{dragan@cs.kent.edu, halrashe@kent.edu}
\and
Institut de Recherche en Informatique Fondamentale, \\
University Paris Diderot - Paris7,
F-75205 Paris Cedex 13, France \\
\email{habib@liafa.univ-paris-diderot.fr}
}
\begin{document}
\maketitle

\begin{abstract} We show that the eccentricities (and thus the centrality indices) of all vertices of a $\delta$-hyperbolic graph $G=(V,E)$ can be computed in
linear time with an additive one-sided error of at most $c\delta$, i.e.,  after a linear time preprocessing, for every vertex $v$ of $G$ one can compute in $O(1)$ time an estimate $\hat{e}(v)$ of its
eccentricity $ecc_G(v)$ such that $ecc_G(v)\leq \hat{e}(v)\leq ecc_G(v)+ c\delta$ for a small constant $c$. We prove  that every $\delta$-hyperbolic graph $G$
has a shortest path tree, constructible in linear time, such that for every vertex $v$ of $G$, $ecc_G(v)\leq ecc_T(v)\leq ecc_G(v)+ c\delta$. These results are based
on  an interesting monotonicity property of the eccentricity function of hyperbolic graphs: the closer a vertex is to the center of $G$, the smaller its eccentricity is.
We also show that the distance matrix
of $G$ with an additive one-sided error of at most $c'\delta$ can be computed in $O(|V|^2\log^2|V|)$ time, where $c'< c$ is a small constant.
Recent empirical studies show that many real-world graphs (including Internet application
networks, web networks, collaboration networks, social networks, biological networks,
and others) have small hyperbolicity. So, we analyze the performance of our algorithms for approximating centrality  and distance matrix on
a number of real-world networks. Our experimental results show that the obtained estimates are even better than the theoretical bounds.
\end{abstract}


\section{Introduction}

The {\it diameter} $diam(G)$ and the {\it radius} $rad(G)$ of a graph $G=(V,E)$ are two fundamental metric parameters that have many important practical applications
in real world networks. The problem of finding the {\it center} $C(G)$ of a graph $G$ is often studied as a facility location problem for networks where one needs to select a single vertex to
place a facility so that the maximum distance from any demand vertex in the network is minimized. In the analysis of social networks
(e.g., citation networks or recommendation networks), biological systems (e.g., protein interaction networks),
computer networks (e.g., the Internet or peer-to-peer networks), transportation networks (e.g., public transportation
or road networks), etc., the {\it eccentricity} $ecc(v)$ of a vertex $v$ is used to measure the importance of $v$ in the network:
the {\em centrality index} of $v$ \cite{Brandes} is defined as $\frac{1}{ecc(v)}$.

Being able to compute efficiently the diameter, center, radius, and vertex centralities of a  given graph  has become an
increasingly important problem in the analysis of large networks. The algorithmic complexity of the diameter and radius
problems is very well-studied. For some special classes of graphs there are efficient algorithms
\cite{AWV16,++Tamir2007,BrChDr98,Cabello2017,ChDr1994,ChDrVa2002,CoDrHaPa,DrNi2000,DvHa,Hakimi,Olariu}.
However, for general graphs, the only known algorithms computing the diameter and the radius
exactly compute the distance between every pair of vertices in the graph, thus solving the all-pairs shortest
paths problem (APSP) and hence computing all eccentricities.  In view of recent negative results \cite{AWV16,BCH14,RV13},
this seems to be the best what one can do since even for graphs with $m=O(n)$ (where $m$ is the number of edges and $n$ is the number of vertices) the existence of a subquadratic time
(that is, $O(n^{2-\epsilon})$ time for some $\epsilon>0$) algorithm for the diameter or the radius problem will refute
the well known Strong Exponential Time Hypothesis (SETH). Furthermore, recent work \cite{AGV15} shows that if the
radius of a possibly dense graph ($m=O(n^2)$) can be computed in subcubic time ($O(n^{3-\epsilon})$ for some $\epsilon> 0$),
then APSP also admits a subcubic algorithm. Such an algorithm for APSP has long eluded researchers, and it is often conjectured
that it does not exist (see, e.g., \cite{RZ12,V-WW10}).

Motivated by these negative results, researches started devoting more attention to development of fast approximation algorithms. In the analysis of
large-scale networks, for fast estimations of diameter, center, radius, and centrality indices, linear or almost linear time algorithms
are desirable. One hopes also for the all-pairs shortest paths problem to have $o(nm)$ time small-constant--factor approximation algorithms.
In general graphs, both diameter and radius can be 2-approximated by a simple linear time algorithm which picks any node and reports its eccentricity.
A 3/2-approximation algorithm for the diameter and the radius which runs in $\tilde{O}(mn^{2/3})$\footnote{$\tilde{O}$ hides a polylog factor.}
time was recently obtained in \cite{ChLaRoScTaVW2014} (see also \cite{ACIM99} for an earlier $\tilde{O}(n^2 + m\sqrt{n})$ time algorithm and \cite{RV13}
for a randomized  $\tilde{O}(m\sqrt{n})$ time algorithm). For the sparse graphs, this is an $o(n^2)$ time approximation algorithm. Furthermore, under
plausible assumptions, no $O(n^{2-\epsilon})$ time algorithm can exist that $(3/2-\epsilon')$-approximates (for $\epsilon,\epsilon'> 0$) the diameter \cite{RV13}
and the radius \cite{AWV16} in sparse graphs. Similar results are known also for all eccentricities: a 5/3-approximation to the  eccentricities of all
vertices can be computed in $\tilde{O}(m^{3/2})$ time \cite{ChLaRoScTaVW2014} and, under plausible assumptions, no $O(n^{2-\epsilon})$ time algorithm
can exist that $(5/3-\epsilon')$-approximates (for $\epsilon,\epsilon'> 0$) the  eccentricities of all vertices in sparse graphs \cite{AWV16}. Better
approximation algorithms are known for some special classes of graphs \cite{BrDrNi97,ChDr2003,Chepoi08,CoDrHaPa,CoDrKo2003,Dr1999,DrKo2017,DrNiBr97,WeYu2016}. A number of heuristics for approximating diameters, radii and eccentricities in real-world graphs were proposed and investigated in \cite{AADr,BCH14,BCHKMT15,BCT17,Brandes,CGR16,DrHaLo2018}.

Approximability of APSP is also extensively investigated. An additive $2$-approximation for APSP in unweighted undirected graphs (the graphs we consider in this paper)
was presented in \cite{DoHaZw2000}. It runs in $\tilde{O}(\min\{n^{3/2}m^{1/2}, n^{7/3}\})$ time and hence improves the runtime of an earlier algorithm from \cite{ACIM99}. In \cite{BeKa}, an $\tilde{O}(n^2)$ time algorithm was designed which computes an approximation of all distances with a multiplicative error of 2 and an additive
error of 1. Furthermore, \cite{BeKa} gives an $O(n^{2.24+o(1)}\epsilon^{-3}\log(n/\epsilon))$ time algorithm that computes an approximation of
all distances with a multiplicative error of $(1+\epsilon)$ and an additive
error of 2. The latter improves an earlier algorithm from \cite{El20015}.
Better algorithms are known for some special classes of graphs (see \cite{BrChDr98,Chepoi08,Dragan05,Thorup} and papers cited therein).

The need for fast  approximation algorithms for estimating diameters, radii, centrality indices, or all pairs shortest paths in large-scale complex networks dictates to look for
geometric and topological properties of those networks and utilize them algorithmically. The classical relationships between the diameter, radius, and center of trees and folklore
linear time algorithms for their computation is one of the departing points of this research. A result from 1869 by C. Jordan \cite{Jo} asserts that the radius  of a
tree $T$ is roughly equal to half of its diameter and the center is either the middle vertex or the middle edge of any diametral path. The diameter and a diametral pair of $T$
can be computed (in linear time) by a simple but elegant procedure: pick any vertex $x$, find any vertex $y$ furthest from $x$, and find once more a vertex $z$ furthest
from $y$; then return $\{ y,z\}$ as a diametral pair. One computation of a furthest vertex is called an {\it FP scan}; hence the diameter of a tree can be computed via two FP scans.
This  {\it two FP scans} procedure can be extended to exact or approximate computation of the diameter and radius in many classes of
tree-like graphs. For example, this approach was used to compute the radius and a central vertex of a chordal graph in linear time \cite{ChDr1994}. In this case, the
center of $G$ is still close to the middle of all $(y,z)$-shortest paths and
$d_G(y,z)$ is not the diameter but is still its good approximation: $d(y,z)\ge diam(G)-2$. Even better, the diameter of any chordal graph can be approximated in linear
time with an additive error 1 ~\cite{DrNiBr97}.  But it turns out that the exact computation of diameters of chordal graphs is as difficult as the general diameter problem: it is even difficult
to decide if the diameter of a split graph is 2 or 3.

The experience with chordal graphs shows that one have to abandon the hope of having fast exact algorithms, even for very simple (from metric point of view) graph-classes,
and to search for fast algorithms approximating $diam(G), rad(G), C(G), ecc_G(v)$  with a small additive constant depending only of the coarse geometry of
the graph. {\em Gromov hyperbolicity} or the  {\em negative curvature} of a graph (and, more generally, of a metric space) is one  such constant.
A graph $G=(V,E)$  is $\delta$-{\it hyperbolic} \cite{AlBrCoFeLuMiShSh,GhHa,BrHa,Gr} if for any four vertices $w,v,x,y$ of $G$, the two largest of the three
distance sums $d(w,v)+d(x,y)$, $d(w,x)+d(v,y)$, $d(w,y)+d(v,x)$ differ by at most $2\delta \geq 0$. The {\em hyperbolicity} $\delta(G)$ of a graph $G$ is
the smallest number $\delta$ such that $G$ is $\delta$-hyperbolic. The hyperbolicity can be viewed as a local measure of how close a graph
is metrically to a tree: the smaller the hyperbolicity is, the closer its metric is to a tree-metric (trees are 0-hyperbolic and chordal graphs are 1-hyperbolic).

Recent empirical studies showed that many real-world graphs (including Internet application networks, web networks, collaboration networks, social networks, biological networks,
and others) are tree-like from a metric point of view \cite{AADr,AdcockSM13,Bo++} or have small hyperbolicity \cite{KeSN16,NaSa,ShavittT08}.
It has been suggested in~\cite{NaSa}, and recently formally proved in~\cite{ChDrVa17},  that the property, observed in real-world networks,
in which traffic between nodes tends to go through a relatively small core of the network, as if the shortest paths between them are curved
inwards, is due to the hyperbolicity of the network. Bending property of the eccentricity function in hyperbolic graphs were used in \cite{HendDr,Hend} to identify core-periphery structures in biological networks. Small hyperbolicity in real-world graphs provides also many algorithmic advantages.
Efficient approximate solutions are attainable for a number of optimization problems~\cite{Chepoi08,Chepoi08ENDM,Chepoi12,ChDrVa17,Chepoi07,DGKMY2015,EKS2016,VS2014}.

In \cite{Chepoi08} we initiated the investigation of diameter, center, and radius problems for $\delta$-hyperbolic graphs and we showed that the existing approach for trees
can be extended to this general framework. Namely, it is shown in \cite{Chepoi08} that if $G$ is a $\delta$-hyperbolic graph and $\{ y,z\}$ is the pair returned after two FP scans,
then $d(y,z)\ge diam(G)-2\delta$, $diam(G)\ge 2rad(G)-4\delta-1$, $diam(C(G))\le 4\delta+1$, and $C(G)$ is contained
in a small ball centered at a middle vertex of any shortest $(y,z)$-path. Consequently, we obtained linear time algorithms for the diameter and radius problems  with
additive errors linearly depending on the input graph's hyperbolicity.

In this paper, we advance this line of research and provide a linear time algorithm for approximate computation of the eccentricities (and thus of centrality indices) of all vertices
of a $\delta$-hyperbolic graph $G$, i.e., we compute the approximate values of {\it all eccentricities} within the same time bounds as one computes the approximation of {\it the largest} or
{\it the smallest  eccentricity} ($diam(G)$ or $rad(G)$). Namely, the algorithm outputs for every vertex $v$ of $G$ an estimate $\hat{e}(v)$ of $ecc_G(v)$ such that $ecc_G(v)\leq \hat{e}(v)\leq ecc_G(v)+ c\delta,$ 
where $c>0$ is a small constant. In fact, we demonstrate that $G$ has a shortest path tree, constructible in linear time,
such that for every vertex $v$ of $G$, $ecc_G(v)\leq ecc_T(v)\leq ecc_G(v)+ c\delta$ (a so-called {\em eccentricity $c\delta$-approximating spanning tree}). This is our first main result of this paper and the main ingredient in proving it is the following interesting dependency between the
eccentricities of vertices of $G$ and their distances to the center $C(G)$: up to an additive error linearly depending on $\delta$, $ecc_G(v)$ is equal to $d(v,C(G))$ plus $rad(G)$.
To establish this new result, we have to revisit the results of \cite{Chepoi08} about diameters, radii, and centers,  by simplifying  their proofs and extending them to all eccentricities. 

Eccentricity $k$-approximating spanning trees were introduced by Prisner in~\cite{Prisner}. 
A spanning tree $T$ of a graph $G$ is called an {\em eccentricity $k$-approximating spanning tree} if for every vertex $v$ of $G$  $ecc_T(v)\leq ecc_G(v)+ k$ holds~\cite{Prisner}. Prisner observed that any graph admitting an additive tree $k$-spanner (that is, a spanning tree $T$ such that $d_T(v,u)\leq d_G(v,u)+ k$ for every pair $u,v$) admits also an eccentricity $k$-approximating spanning tree. Therefore, eccentricity $k$-approximating spanning trees exist in interval graphs for $k=2$~\cite{KLM,MVR1996,Prisner1997}, in asteroidal-triple--free graph~\cite{KLM}, strongly chordal
graphs~\cite{BCD1999} and dually chordal graphs~\cite{BCD1999} for
$k=3$. On the other hand, although for every $k$ there is a chordal
graph without an additive  tree $k$-spanner~\cite{KLM,Prisner1997}, yet as
Prisner demonstrated in~\cite{Prisner}, every chordal graph has an
eccentricity 2-approximating spanning tree. Later this result was extended in~\cite{DrKo2017} to a larger family of graphs which includes all chordal graphs and all plane triangulations with inner vertices of degree at least 7. Both those classes belong to the class of 1-hyperbolic graphs. Thus, our result extends the result of~\cite{Prisner} to all $\delta$-hyperbolic graphs.
 
As our second main result, we show that in every $\delta$-hyperbolic graph $G$ all distances with an additive one-sided error of at most $c'\delta$ can be found in $O(|V|^2\log^2|V|)$ time, where $c'< c$ is a small constant. With a recent result in~\cite{fast-appr-hyp}, this demonstrates an equivalence between approximating the hyperbolicity and approximating the distances in graphs. Note that every $\delta$-hyperbolic graph $G$ admits a distance approximating tree $T$~\cite{Chepoi08,Chepoi08ENDM,Chepoi12}, that is, a tree $T$ (which is not necessarily a spanning tree) such that $d_T(v,u)\leq d_G(v,u)+ O(\delta\log n)$ for every pair $u,v$. Such a tree can be used to compute all distances in $G$ with an additive one-sided error of at most $O(\delta\log n)$ in $O(|V|^2)$ time. Our new result removes the dependency of the additive error from $\log n$ and has a much smaller constant in front of $\delta$. Note also that the tree $T$ may use edges not present in $G$ (not a spanning tree of $G$) and thus cannot serve as an eccentricity $O(\delta\log n)$-approximating spanning tree. Furthermore, as chordal graphs are 1-hyperbolic, for every $k$ there is a 1-hyperbolic graph without an additive tree $k$-spanner~\cite{KLM,Prisner1997}.

At the conclusion of this paper, we analyze the performance of our algorithms for approximating eccentricities and distances on a number of real-world networks. Our experimental results show that the estimates on eccentricities and distances obtained are even better than the theoretical bounds proved.

%
%

\section{Preliminaries}

\subsection{Center, diameter, centrality}
All graphs $G=(V,E)$ occurring in this paper are finite, undirected, connected, without loops or multiple edges. We use $n$ and $|V|$ interchangeably to denote the number of vertices and $m$ and $|E|$
to denote the number of edges in $G$.
The {\em length of a path} from a vertex $v$ to a vertex $u$ is the number of edges in the path. The {\em distance} $d_G(u,v)$ between vertices $u$ and $v$ is the length of a shortest path connecting $u$ and $v$ in $G$.
The \emph{eccentricity} of a vertex $v$, denoted by $ecc_G(v)$, is the largest distance from $v$ to any other vertex, i.e., $ecc_G(v)=\max_{u \in V} d_G(v,u)$.  The {\it centrality index} of $v$ is $\frac{1}{ecc_G(v)}$.
The \emph{radius} $rad(G)$ of a graph $G$ is the minimum eccentricity of a vertex in $G$, i.e., $rad(G)=\min_{v \in V} ecc_G(v)$. The \emph{diameter} $diam(G)$ of a graph $G$ is the the maximum eccentricity of a vertex in $G$, i.e., $diam(G)=\max_{v \in V} ecc_G(v)$.
The \emph{center} $C(G)=\{c \in V: ecc_G(c)=rad(G)\}$ of a graph $G$ is the set of vertices with minimum eccentricity.

\subsection{Gromov hyperbolicity and thin geodesic triangles}\label{sec:notions}
Let $(X,d)$ be a metric space. The {\it Gromov product} of $y,z\in X$ with respect to $w$ is defined to be
$$(y|z)_w=\frac{1}{2}(d(y,w)+d(z,w)-d(y,z)).$$
A metric space $(X,d)$ is said to be $\delta$-{\it hyperbolic} \cite{Gr} for $\delta\ge 0$ if
$$(x|y)_w\ge \min \{ (x|z)_w, (y|z)_w\}-\delta$$
for all $w,x,y,z\in X$. Equivalently, $(X,d)$ is $\delta$-hyperbolic
if  for any four points $u,v,x,y$ of $X$, the two largest of the three distance sums
$d(u,v)+d(x,y)$, $d(u,x)+d(v,y)$, $d(u,y)+d(v,x)$ differ by at most
$2\delta \geq 0$. A connected graph $G=(V,E)$  is
$\delta$-{\it hyperbolic} (or of {\it hyperbolicity} $\delta$) if the metric space $(V,d_G)$ is $\delta$-hyperbolic,
where $d_G$ is the standard shortest path metric defined on $G$.

$\delta$-Hyperbolic graphs generalize $k$-chordal graphs and graphs of bounded tree-length:
each $k$-chordal graph has the tree-length at most $\lfloor\frac{k}{2}\rfloor$
\cite{DoGa} and each tree-length $\lambda$ graph has hyperbolicity at most
$\lambda$ \cite{Chepoi08,Chepoi08ENDM}. Recall that a graph is
\emph{$k$-chordal}  if its induced
cycles are of length at most~$k$, and it is of {\it tree-length}
$\lambda$ if it has a Robertson-Seymour tree-decomposition into bags
of diameter at most $\lambda$ \cite{DoGa}.

For geodesic metric spaces and graphs there exist several equivalent
definitions of $\delta$-hyperbolicity involving different but
comparable values of $\delta$ \cite{AlBrCoFeLuMiShSh,BrHa,GhHa,Gr}.
{\it In this paper, we will use the definition via
thin geodesic triangles.} Let $(X,d)$ be a metric space.  A {\it geodesic} joining two
points $x$ and $y$ from $X$ is a (continuous) map $f$ from the segment $[a,b]$
of ${\mathbb R}^1$ of length $|a-b|=d(x,y)$ to $X$ such that
$f(a)=x, f(b)=y,$ and $d(f(s),f(t))=|s-t|$ for all $s,t\in
[a,b].$ A metric space $(X,d)$ is {\it geodesic} if every pair of
points in $X$ can be joined by a geodesic.  Every unweighted graph $G=(V,E)$
equipped with its standard distance $d_G$ can be transformed into a
geodesic (network-like) space $(X,d)$ by replacing every edge
$e=uv$ by a segment $[u,v]$ of length 1; the segments may
intersect only at common ends. Then $(V,d_G)$ is isometrically
embedded in a natural way in $(X,d).$ The
restrictions of geodesics of $X$ to the  vertices $V$ of $G$
are the shortest paths of $G$.

Let $(X,d)$ be a geodesic metric space.  A \textit{geodesic triangle}
$\Delta(x,y,z)$ with $x, y, z \in X$ is the union $[x,y] \cup [x,z]
\cup [y,z]$ of three geodesic segments connecting these vertices. Let $m_x$ be the point of the
geodesic segment $[y,z]$ located at distance $\alpha_y :=(x|z)_y=
(d(y,x)+d(y,z)-d(x,z))/2$ from $y.$ Then $m_x$ is located at
distance $\alpha_z :=(y|x)_z= (d(z,y)+d(z,x)-d(y,x))/2$ from $z$ because
$\alpha_y + \alpha_z = d(y,z)$. Analogously, define the points
$m_y\in [x,z]$ and $m_z\in [x,y]$ both located at distance $\alpha_x
:=(y|z)_x= (d(x,y)+d(x,z)-d(y,z))/2$ from $x;$ see Fig.~\ref{fig1} for an
illustration. There exists a unique isometry $\varphi$ which maps
$\Delta(x,y,z)$ to a tripod $T(x,y,z)$ consisting of three
solid segments $[x,m],[y,m],$ and $[z,m]$ of lengths
$\alpha_x,\alpha_y,$ and $\alpha_z,$ respectively. This isometry
maps the vertices $x,y,z$ of $\Delta(x,y,z)$ to the respective
leaves of $T(x,y,z)$ and the points $m_x,m_y,$
and $m_z$ to the center $m$ of this tripod. Any other point  of
$T(x,y,z)$ is the image of exactly two points of $\Delta
(x,y,z).$ A geodesic triangle $\Delta(x,y,z)$ is called
$\delta$-{\it thin} if for all points $u,v\in \Delta(x,y,z),$
$\varphi(u)=\varphi(v)$ implies $d(u,v)\le \delta.$ A graph $G=(V,E)$
whose all geodesic triangles $\Delta(u,v,w)$, $u,v,w\in V$, are $\delta$-thin
is called a {\em graph with $\delta$-thin triangles}, and $\delta$ is called the {\em thinness} parameter of $G$.

\begin{figure}
\begin{center}
\vspace*{-0.4cm}
\includegraphics*[width=8truecm]{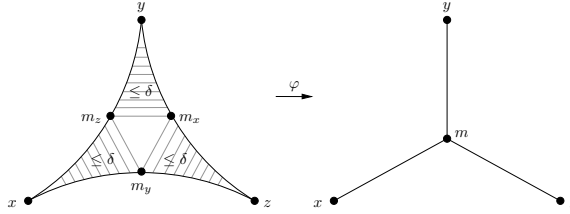}
\end{center}
\caption{A geodesic triangle $\Delta(x,y,z),$ the points $m_x, m_y,
m_z,$ and the tripod $\Upsilon(x,y,z)$} 
\label{fig1}
\vspace*{-0.2cm}
\end{figure}

The following result shows that hyperbolicity of a
geodesic space or a graph is equivalent to having thin geodesic
triangles.

\begin{proposition} [\cite{AlBrCoFeLuMiShSh,BrHa,GhHa,Gr}]\label{hyp_charact}
Geodesic triangles of geodesic $\delta$-hyperbolic spaces or graphs are $4\delta$-thin.
Conversely, geodesic spaces or graphs with $\delta$-thin triangles are $\delta$-hyperbolic.
\end{proposition}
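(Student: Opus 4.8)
The plan is to prove the two implications separately, working throughout with the Gromov-product reformulation of $\delta$-hyperbolicity, namely the inequality $(x|y)_w \ge \min\{(x|z)_w,(y|z)_w\}-\delta$, which is equivalent (with the same $\delta$) to the four-point condition since $(x|y)_w-(x|z)_w$ equals half the difference of two of the three distance sums. I will also repeatedly use the elementary fact that if $u$ lies on a geodesic $[x,y]$ with $d(x,u)=t$, then $(u|y)_x=t$ and $(x|y)_u=0$, both immediate from $d(x,u)+d(u,y)=d(x,y)$.

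For the direct implication, suppose $G$ is $\delta$-hyperbolic and fix a geodesic triangle $\Delta(x,y,z)$ with tripod map $\varphi$. A pair of points with $\varphi(u)=\varphi(v)$ generically consists of $u\in[x,y]$ and $v\in[x,z]$ with $d(x,u)=d(x,v)=t\le \alpha_x=(y|z)_x$; the two remaining configurations (preimages near $y$ and near $z$) are handled identically after permuting the roles of the vertices. Since $(u|v)_x=t-\tfrac12 d(u,v)$, it suffices to show $(u|v)_x\ge t-2\delta$. I would obtain this by two applications of the hyperbolic inequality based at $x$: first $(u|z)_x\ge\min\{(u|y)_x,(y|z)_x\}-\delta=\min\{t,\alpha_x\}-\delta=t-\delta$, using $(u|y)_x=t$ and $t\le\alpha_x$; then $(u|v)_x\ge\min\{(u|z)_x,(z|v)_x\}-\delta\ge\min\{t-\delta,t\}-\delta=t-2\delta$, using $(z|v)_x=t$. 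This yields $d(u,v)\le 4\delta$, so every geodesic triangle is $4\delta$-thin.

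For the converse, assume every geodesic triangle is $\delta$-thin and fix four points $w,x,y,z$; the goal is $(x|y)_w\ge\min\{(x|z)_w,(y|z)_w\}-\delta$. Set $a=(x|z)_w$ and $b=(y|z)_w$, and assume without loss of generality $a\le b$. The triangle inequality forces $a\le\min\{d(w,x),d(w,y),d(w,z)\}$, so I may choose points $p\in[w,x]$, $q\in[w,y]$, $r\in[w,z]$ each at distance $a$ from $w$. Thinness of $\Delta(w,x,z)$ gives $d(p,r)\le\delta$ (as $a=(x|z)_w$), and thinness of $\Delta(w,y,z)$ gives $d(q,r)\le\delta$ (as $a\le b=(y|z)_w$), whence $d(p,q)\le 2\delta$. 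The chain $x\to p\to q\to y$ together with $d(x,p)=d(w,x)-a$ and $d(q,y)=d(w,y)-a$ then yields $d(x,y)\le d(w,x)+d(w,y)-2a+2\delta$, which rearranges to $(x|y)_w\ge a-\delta$; translating back to distance sums gives the four-point condition with defect $2\delta$, i.e. $\delta$-hyperbolicity.

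The routine parts are the Gromov-product identities for points on geodesics and the bookkeeping of constants. The one genuine point of care is the converse: one must verify that the auxiliary points $p,q,r$ actually exist on the chosen geodesics (that $a$ does not exceed the relevant side lengths) before invoking thinness, and must ensure the $\delta$-thin hypothesis applies at radius $a$ in \emph{both} triangles $\Delta(w,x,z)$ and $\Delta(w,y,z)$ — which is precisely where the normalization $a\le b$ is used.
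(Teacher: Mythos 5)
The paper offers no proof of this proposition at all: it is quoted directly from the cited literature (\cite{AlBrCoFeLuMiShSh,BrHa,GhHa,Gr}), and your argument is exactly the canonical proof found in those sources — the direct implication by two applications of the inequality $(p|q)_w\ge\min\{(p|r)_w,(q|r)_w\}-\delta$ based at a corner of the triangle, giving $(u|v)_x\ge t-2\delta$ and hence $d(u,v)\le 4\delta$, and the converse by placing comparison points $p,q,r$ at distance $a=\min\{(x|z)_w,(y|z)_w\}$ from $w$ on geodesics sharing the side $[w,z]$ and applying thinness to the two triangles $\Delta(w,x,z)$ and $\Delta(w,y,z)$. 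All identities and constants check out (including the existence of $p,q,r$, which your bound $a\le\min\{d(w,x),d(w,y),d(w,z)\}$ justifies), so the proposal is correct and essentially coincides with the standard proof the paper is citing.
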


In what follows, we will need few more notions and notations. Let $G=(V,E)$ be a graph.
By $[x,y]$ we denote a shortest path connecting vertices $x$ and $y$ in $G$; we call $[x,y]$ a {\it geodesic}
between $x$ and $y$. A {\em ball} $B(s,r)$ of $G$ centered at vertex $s \in V$ and with radius $r$ is the set of
all vertices with distance no more than $r$ from $s$ (i.e., $B(s,r):=\{v\in V: d_G(v,s) \leq r \}$). The {\em $k$th-power}
of a graph $G=(V,E)$ is the graph $G^k=(V,E')$ such that $xy\in E'$ if and only if $0<d_G(x,y)\le k$.
Denote by  $F(x):=\{y\in V: d_G(x,y)=ecc_G(x)\}$ the set of all vertices of $G$ that are {\em most distant} from $x$. Vertices $x$ and $y$ of $G$ are called {\em mutually distant} if $x\in F(y)$ and $y\in F(x)$, i.e., $ecc_G(x)=ecc_G(y)=d_G(x,y)$.

\section{Fast approximation of eccentricities}\label{sec:ecc}
In this section, we give linear and almost linear time algorithms for sharp estimation of the diameters, the radii, the centers and
the eccentricities of all vertices in graphs  with $\delta$-thin triangles. Before presenting those algorithms, we establish some conditional lower
bounds on complexities of computing the diameters and the radii in those graphs.

\subsection{Conditional lower bounds on complexities}

Recent work has revealed convincing evidence that solving the diameter problem in subquadratic time might
not be possible, even in very special classes of graphs. Roditty and Vassilevska W. \cite{RV13} showed that an algorithm that
can distinguish between diameter 2 and 3 in a sparse graph in subquadratic time refutes the
following widely believed conjecture.
\medskip

{\em The Orthogonal Vectors Conjecture:} There is no $\epsilon > 0$ such that for all $c\geq 1$, there is  an algorithm that given two lists of  $n$ binary vectors $A,B\subseteq \{0, 1\}^d$  where $d = c \log n$ can determine if there is an
orthogonal pair $a\in A, b\in B$, in $O(n^{2-e})$ time. \medskip

Williams \cite{W04} showed that the Orthogonal Vectors (OV) Conjecture is implied by the well-known Strong Exponential Time Hypothesis (SETH) of Impagliazzo, Paturi, and Zane \cite{IPZ01,IP01}. Nowadays many papers base the hardness of problems on SETH and the OV
conjecture (see, e.g.,  \cite{AWV16,BCH14,V-W15} and papers cited therein).

Since all geodesic triangles of a graph constructed in the reduction in \cite{RV13} are 2-thin, we can rephrase the result from \cite{RV13} as follows.

\begin{statement}\label{stat:lower-diam} If for some $\epsilon>0$, there is an algorithm that can determine if a given graph with 2-thin triangles, $n$ vertices and $m=O(n)$ edges has diameter 2 or 3 in $O(n^{2-\epsilon})$ time, then the Orthogonal Vector Conjecture is false.
\end{statement}

To prove a similar lower bound result for the radius problem, recently Abboud et al. \cite{AWV16} suggested to use the following natural
and plausible variant of the OV conjecture.
\medskip

{\em The Hitting Set Conjecture:} There is no $\epsilon > 0$ such that for all $c\geq 1$, there is  an algorithm that given two lists $A,B$
of  $n$ subsets of a universe $U$ of size $c \log n$, can decide
in $O(n^{2-e})$  time if there is a set in the first list that intersects
every set in the second list, i.e. a hitting set.
\medskip

Abboud et al. \cite{AWV16} showed that an algorithm that
can distinguish between radius 2 and 3 in a sparse graph 
in subquadratic time refutes the
Hitting Set Conjecture. Since all geodesic triangles of a graph constructed in the reduction in \cite{AWV16} are 2-thin,  rephrasing that result from \cite{AWV16}, we have.

\begin{statement}\label{stat:lower-rad} If for some $\epsilon>0$, there is an algorithm that can determine if a given graph with 2-thin triangles, $n$ vertices, and $m=O(n)$ edges has radius 2 or 3 in $O(n^{2-\epsilon})$ time, then the Hitting Set Conjecture is false.
\end{statement}


\subsection{Fast additive approximations}

In this subsection, we show that in a graph $G$ with $\delta$-thin triangles the eccentricities of all vertices can be computed in total linear time with an additive error depending on $\delta$. We establish that the eccentricity of a vertex is determined (up-to a small error) by how far the vertex is from the center $C(G)$ of $G$. Finally, we show how to construct a spanning tree $T$ of $G$ in which the eccentricity of any vertex is its eccentricity in $G$ up to an additive error depending only on $\delta$. For these purposes, we revisit and extend several results from our previous paper \cite{Chepoi08}  concerning the linear time approximation of diameter, radius, and centers of $\delta$-hyperbolic graphs. For these particular cases, we provide simplified proofs, leading to better additive errors due to the use of thinness of triangles instead of the four point condition and to the computation in $O(\delta|E|)$ time of a pair of mutually distant vertices.

Define the eccentricity layers of a graph $G$ as follows: for $k=0,\dots,diam(G)-rad(G)$ set
$$C^k(G):=\{v\in V: ecc_G(v)=rad(G)+k\}.$$ With this notation, the center of a graph is  $C(G)=C^{0}(G)$. In what follows, it will be convenient to define also the eccentricity of the middle point $m$ of any edge $xy$ of $G$; set $ecc_G(m)=\min\{ecc_G(x),ecc_G(y)\}+1/2$.

We start with a proposition showing that, in a graph $G$ with $\delta$-thin triangles, a middle vertex of any geodesic between two mutually distant vertices has the eccentricity close to $rad(G)$ and is not too far from the center $C(G)$ of $G$.

\begin{proposition}\label{prop:middlepoint-ecc} Let $G$ be a graph with $\delta$-thin triangles, $u,v$ be a pair of mutually distant vertices of $G$.
 \begin{enumerate}
   \item[$(a)$] If $c^*$ is the middle point of any $(u,v)$-geodesic, then $ecc_G(c^*)\leq \frac{d_G(u,v)}{2} +\delta\leq rad(G)+\delta$.
   \item[$(b)$] If $c$ is a middle vertex of any $(u,v)$-geodesic, then $ecc_G(c)\leq \lceil{\frac{d_G(u,v)}{2}}\rceil +\delta\leq rad(G)+\delta$.
   \item[$(c)$] $d_G(u,v)\geq  2rad(G)-2\delta-1$. In particular, $diam(G)\geq 2rad(G)-2\delta-1.$
\item[$(d)$] If $c$ is a middle vertex of any $(u,v)$-geodesic and $x\in C^k(G),$ then $k-\delta \le d_G(x,c) \le k+2\delta+1$. In particular, \mbox{$C(G)\subseteq B(c,2\delta+1)$}.
 \end{enumerate}
\end{proposition}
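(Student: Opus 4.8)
The plan is to prove (a) and (b) together by one thin-triangle argument, deduce (c) in a line, and treat (d) separately, its upper bound being the crux. For (a), fix an arbitrary vertex $w$ and form the geodesic triangle $\Delta(u,v,w)$ on the chosen $(u,v)$-geodesic. With $\alpha_u=(v|w)_u$, the Gromov meeting point $m_w\in[u,v]$ lies at distance $\alpha_u$ from $u$, and $c^*$ lies on either $[u,m_w]$ or $[m_w,v]$. In the first case $\varphi(c^*)$ sits on the $u$-branch of the tripod, hence has the same image as the point $q\in[u,w]$ at distance $d_G(u,c^*)$ from $u$; $\delta$-thinness gives $d_G(c^*,q)\le\delta$, so $d_G(c^*,w)\le\delta+d_G(u,w)-d_G(u,c^*)$. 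Mutual distance gives $d_G(u,w)\le ecc_G(u)=d_G(u,v)$, and $d_G(u,c^*)=d_G(u,v)/2$, whence $d_G(c^*,w)\le d_G(u,v)/2+\delta$; the second case is symmetric in $v$. As $w$ is arbitrary, $ecc_G(c^*)\le d_G(u,v)/2+\delta$, and the identical computation with $d_G(u,c)=\lfloor d_G(u,v)/2\rfloor$ (or the symmetric $d_G(v,c)$) gives the ceiling bound in (b). The chain $d_G(u,v)\le diam(G)\le 2\,rad(G)$ converts both into the stated bounds $rad(G)+\delta$.

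Part (c) is then immediate: by (b) there is a middle vertex $c$ with $rad(G)\le ecc_G(c)\le\lceil d_G(u,v)/2\rceil+\delta\le (d_G(u,v)+1)/2+\delta$, and rearranging yields $d_G(u,v)\ge 2\,rad(G)-2\delta-1$; since $diam(G)\ge d_G(u,v)$, the second assertion follows.

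For (d), the lower bound uses only that eccentricity is $1$-Lipschitz, so $ecc_G(x)-ecc_G(c)\le d_G(x,c)$; with $ecc_G(x)=rad(G)+k$ and $ecc_G(c)\le rad(G)+\delta$ from (b) this gives $d_G(x,c)\ge k-\delta$. The upper bound is the main obstacle and is where the work concentrates. I would form $\Delta(u,v,x)$ and estimate how far $x$ is from its projection $m_x\in[u,v]$: the point of $[u,x]$ at distance $\alpha_u=(v|x)_u$ from $u$ shares $m_x$'s tripod image, so thinness gives $d_G(x,m_x)\le\alpha_x+\delta$ with $\alpha_x=(u|v)_x$. Since $c$ and $m_x$ both lie on $[u,v]$, $d_G(c,m_x)=|d_G(u,c)-\alpha_u|\le\tfrac12+\tfrac12|d_G(u,x)-d_G(v,x)|$, the $\tfrac12$ accounting for $c$ being a vertex rather than the exact midpoint. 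Adding the two estimates and using $\alpha_x+\tfrac12|d_G(u,x)-d_G(v,x)|=\max\{d_G(u,x),d_G(v,x)\}-d_G(u,v)/2$ gives
$$d_G(x,c)\le\max\{d_G(u,x),d_G(v,x)\}-\tfrac{d_G(u,v)}{2}+\delta+\tfrac12.$$
Bounding $\max\{d_G(u,x),d_G(v,x)\}\le ecc_G(x)=rad(G)+k$ and, via (c), $d_G(u,v)/2\ge rad(G)-\delta-\tfrac12$, the right-hand side collapses to $k+2\delta+1$. Setting $k=0$ (so $x\in C(G)=C^0(G)$) yields $C(G)\subseteq B(c,2\delta+1)$. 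The points I expect to require the most care are tracking which tripod branch $c^*$ (resp.\ $m_x$) falls on, and the half-integer bookkeeping coming from odd $d_G(u,v)$ and from the middle vertex versus the middle point.
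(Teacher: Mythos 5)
Your proposal is correct and takes essentially the same route as the paper's proof: thinness of the geodesic triangle $\Delta(u,v,x)$ combined with mutual distance of $u,v$ for (a)--(b), the rearranged radius bound for (c) (the paper argues by contradiction, which is equivalent), and for (d) the same projection estimate $d_G(x,c)\le \max\{d_G(u,x),d_G(v,x)\}-\frac{d_G(u,v)}{2}+\delta+\frac{1}{2}$, which the paper derives after a without-loss-of-generality placement of $c^*$ on the $v$-side of $m_x$ while you symmetrize via the max identity. All constants and the half-integer bookkeeping (middle vertex versus middle point) match the paper's.
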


\begin{proof} Let $x$ be an arbitrary vertex of $G$ and $\Delta(u,v,x):=[u,v]\cup [v,x]\cup [x,u]$ be a geodesic triangle, where $[v,x], [x,u]$ are arbitrary geodesics connecting $x$ with $v$ and $u$.
Let $m_x$ be a point on $[u,v]$ which is at distance $(x|u)_v= \frac{1}{2}(d(x,v)+d(v,u)-d(x,u))$ from $v$ and hence at distance $(x|v)_u= \frac{1}{2}(d(x,u)+d(v,u)-d(x,v))$ from $u$. Since $u$ and $v$ are mutually distant, we can assume, without loss of generality, that $c^*$  is located on $[u,v]$ between $v$ and $m_x$, i.e., $d(v,c^*)\leq d(v,m_x)=(x|u)_v$, and hence $(x|v)_u\leq (x|u)_v$.
Since $d_G(v,x)\leq d_G(v,u)$, we also get $(u|v)_x\leq (x|v)_u$.

(a) By the triangle inequality and since $d_G(u,v)\leq diam(G)\leq 2rad(G)$, we get
\begin{eqnarray}
\nonumber d_G(x,c^*)& \leq & e(u|v)_x+\delta+d_G(u,c^*)-(x|v)_u\\
\nonumber               & \leq & d_G(u,c^*)+\delta= \frac{d_G(u,v)}{2} +\delta\leq rad(G)+\delta.
\end{eqnarray}

(b) 
Since $c^*=c$ when $d_G(u,v)$ is even and $d_G(c^*,c)=\frac{1}{2}$ when $d_G(u,v)$ is odd, we have \mbox{$ecc_G(c)\leq ecc_G(c^*)+\frac{1}{2}$}. Additionally to the proof of (a), one needs only to consider the case when $d_G(u,v)$ is odd. We know that  the middle point $c^*$ sees all vertices of $G$ within distance at most $\frac{d_G(u,v)}{2}+\delta$. Hence, both ends of the edge of  $(u,v)$-geodesic,  containing the point $c^*$ in the middle, have eccentricities at most $$\frac{d_G(u,v)}{2}+ \frac{1}{2}+\delta=\lceil{\frac{d_G(u,v)}{2}}\rceil +\delta\leq \lceil{\frac{2rad(G)-1}{2}}\rceil+\delta=rad(G)+\delta.$$

(c) Since a middle vertex $c$ of any $(u,v)$-geodesic sees all vertices of $G$ within distance at most \mbox{$\lceil{\frac{d_G(u,v)}{2}}\rceil +\delta$}, if $d_G(u,v)\leq 2rad(G)-2\delta-2$, then
$$ecc_G(c)\leq \lceil{\frac{d_G(u,v)}{2}}\rceil +\delta\leq \lceil{\frac{2rad(G)-2\delta-2}{2}}\rceil +\delta < rad(G),$$ which is impossible.

(d) In the proof of (a), instead of an arbitrary vertex $x$, consider any vertex $x$ from $C^k(G)$. By the triangle inequality and since $d_G(u,v)\geq 2rad(G)-2\delta-1$ and both $d_G(u,x),d_G(x,v)$ are at most $rad(G)+k$,  we get
\begin{eqnarray}
\nonumber d_G(x,c^*)& \leq & (u|v)_x+\delta+(x|u)_v-d_G(v,c^*)= d_G(v,x)-d_G(v,c^*)+\delta  \\
\nonumber               & \leq & rad(G)+k-\frac{d_G(u,v)}{2} +\delta\leq k+2\delta+\frac{1}{2}.
\end{eqnarray}
Consequently, $d_G(x,c)\leq d_G(x,c^*)+\frac{1}{2}\leq k+2\delta+1.$ On the other hand, since $ecc_G(x)\le ecc_G(c)+d_G(x,c)$ and  $ecc_G(c)\le rad(G)+\delta$, by statement (a), we get
\begin{eqnarray}
\nonumber d_G(x,c)& \geq & ecc_G(x) - ecc_G(c) = k+rad(G)-ecc_G(c)\\
\nonumber               & \geq & (k+rad(G)) - (rad(G)+\delta) = k-\delta.
\end{eqnarray}
\qed
\end{proof}

As an easy consequence of Proposition \ref{prop:middlepoint-ecc}(d), we get that the eccentricity $ecc_G(x)$ of any vertex $x$ is equal, up to an additive one-sided error of at most $4\delta+2$, to $d_G(x,C(G))$ plus $rad(G)$.

\begin{corollary}\label{prop:formula-ecc}
For every vertex $x$ of a graph $G$ with $\delta$-thin triangles,
$$d_G(x,C(G))+rad(G)-4\delta-2 \le ecc_G(x)\le d_G(x,C(G))+rad(G).$$
\end{corollary}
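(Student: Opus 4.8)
The plan is to derive the Corollary as a direct consequence of Proposition~\ref{prop:middlepoint-ecc}(d), exploiting that the center $C(G)$ sits inside a small ball around a middle vertex $c$ of a $(u,v)$-geodesic, where $u,v$ is a pair of mutually distant vertices. The key idea is that part (d) already gives two-sided control on $d_G(x,c)$ in terms of the layer index $k$ (where $x\in C^k(G)$, so $ecc_G(x)=rad(G)+k$), and I only need to convert the distance-to-$c$ estimate into a distance-to-$C(G)$ estimate and then solve for $ecc_G(x)$.

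First I would fix a middle vertex $c$ of a $(u,v)$-geodesic as in the proposition. For the \emph{upper} bound on $ecc_G(x)$, I would argue that $ecc_G(x) \le k + rad(G)$ trivially holds as an equality by definition of the layers; the content is showing $k \le d_G(x,C(G))$. For this, take any center $c_0 \in C(G)$. Since $ecc_G(x) \le ecc_G(c_0) + d_G(x,c_0) = rad(G) + d_G(x,c_0)$, we get $k = ecc_G(x) - rad(G) \le d_G(x,c_0)$, and minimizing over $c_0 \in C(G)$ yields $k \le d_G(x,C(G))$. Hence $ecc_G(x) = rad(G) + k \le rad(G) + d_G(x,C(G))$, which is the right-hand inequality. (Note this direction does not even need part (d) or hyperbolicity.)

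For the \emph{lower} bound, I would use part (d) in both directions. From (d), $d_G(x,c) \le k + 2\delta + 1$, and also $C(G) \subseteq B(c, 2\delta+1)$, so every center lies within $2\delta+1$ of $c$. By the triangle inequality, $d_G(x,C(G)) \le d_G(x,c) + \max_{c_0\in C(G)} d_G(c,c_0) \le (k+2\delta+1) + (2\delta+1) = k + 4\delta + 2$. Rearranging gives $k \ge d_G(x,C(G)) - 4\delta - 2$, and therefore $ecc_G(x) = rad(G) + k \ge d_G(x,C(G)) + rad(G) - 4\delta - 2$, the left-hand inequality.

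The main obstacle, such as it is, is bookkeeping rather than conceptual: I must be careful that the additive error $4\delta+2$ is assembled from the correct two pieces (the $k+2\delta+1$ upper estimate for $d_G(x,c)$ and the $2\delta+1$ radius of the ball containing $C(G)$), and that the direction of each triangle inequality is right so that the two one-sided errors accumulate on the intended side. Since the lower bound on $k$ in part (d) ($k-\delta \le d_G(x,c)$) is \emph{not} needed here, I would avoid invoking it to keep the constant tight. Everything else follows immediately from the definitions of the eccentricity layers and of $d_G(x,C(G))$ as a minimum over centers.
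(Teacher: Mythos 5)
Your proof is correct and takes essentially the same route as the paper: both obtain the lower bound by combining the two estimates from Proposition~\ref{prop:middlepoint-ecc}(d) (namely $d_G(x,c)\le k+2\delta+1$ and $C(G)\subseteq B(c,2\delta+1)$) via the triangle inequality to get the error $4\delta+2$, and both get the upper bound from the trivial inequality $ecc_G(x)\le d_G(x,c_x)+ecc_G(c_x)=d_G(x,C(G))+rad(G)$ through a closest center, valid in all graphs. The only cosmetic difference is that the paper fixes a closest center $c_x$ and chains the inequalities through $d_G(x,c)$, whereas you bound $d_G(x,C(G))$ directly; you also correctly observe, as the paper implicitly does, that the inequality $k-\delta\le d_G(x,c)$ of part (d) is not needed.
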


\begin{proof}
Consider an arbitrary vertex $x$ in $G$ and assume that $ecc_G(x)=rad(G)+k$. Let $c_x$ be a vertex from $C(G)$ closest to $x$.
By Proposition \ref{prop:middlepoint-ecc}(d), $d_G(c,c_x)\leq 2\delta +1$ and $d_G(x,c)\le k+2\delta+1=ecc_G(x)-rad(G)+2\delta+1$.  Hence, $$d_G(x,C(G))=d_G(x, c_x)\leq d_G(x,c)+d_G(c, c_x)\le d_G(x,c)+2\delta+1$$ and $$ecc_G(x)\ge d_G(x,c)+rad(G)-2\delta-1.$$
Combining both inequalities, we get $$ecc_G(x)\ge d_G(x,C(G))+rad(G)-4\delta-2.$$
Note also that, by the triangle inequality, $ecc_G(x)\leq d_G(x,c_x)+ecc_G(c_x)= d_G(x,C(G))+rad(G)$ (that is, the right-hand inequality holds for all graphs). \qed
\end{proof}

It is interesting to note that the equality $ecc_G(x)= d_G(x,C(G))+rad(G)$ holds for every vertex of a graph $G$ if and only if the eccentricity function $ecc_G(\cdot)$ on $G$ is unimodal (that is, every local minimum is a global minimum)\cite{Dragan-Diss}. A slightly weaker condition holds for all chordal graphs  \cite{DrKo2017}: for every vertex $x$ of a chordal graph $G$, $ecc_G(x)\ge d_G(x,C(G))+rad(G)-1$.


\begin{proposition}\label{prop:r-vertex-ecc} Let $G$ be a graph with $\delta$-thin triangles and $u,v$ be a pair of vertices of $G$ such that $v\in F(u)$.
 \begin{enumerate}
   \item[$(a)$] If $w$ is a vertex of a $(u,v)$-geodesic at distance $rad(G)$ from $v$, then $ecc_G(w)\leq rad(G)+\delta$.
   \item[$(b)$] For every pair of vertices $x,y\in V$, $\max\{d_G(v,x),d_G(v,y)\}\geq d_G(x,y)-2\delta$.
   \item[$(c)$] $ecc_G(v)\geq diam(G)-2\delta\geq 2rad(G)-4\delta-1$.
  \item[$(d)$] If $t\in F(v),$ $c$ is a vertex of a $(v,t)$-geodesic at distance $\lceil\frac{d_G(v,t)}{2}\rceil$ from $t$ and $x\in C^k(G)$, then $ecc_G(c)\leq rad(G)+3\delta$ and $k-3\delta \le d_G(x,c) \le k+3\delta+1$. In particular, $C(G)\subseteq B(c,3\delta+1)$.
 \end{enumerate}
\end{proposition}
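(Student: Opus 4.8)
The plan is to treat all four parts uniformly by fixing an arbitrary vertex $x$, forming the geodesic triangle spanned by $x$ and the relevant pair ($\Delta(u,v,x)$ for (a)--(c), $\Delta(v,t,x)$ for (d)), and exploiting $\delta$-thinness by projecting the point of interest onto whichever of the two sides incident to $x$ it maps to on the tripod. Two elementary facts will be used repeatedly: since $v\in F(u)$ we have $d_G(u,x)\le d_G(u,v)$ for \emph{every} vertex $x$, and since each vertex lies within $rad(G)$ of a center, $diam(G)\le 2rad(G)$, so $d_G(v,x)\le 2rad(G)$ for all $x$. For (a), I would let $m_x\in[u,v]$ be the Gromov split point, at distance $(u|x)_v$ from $v$, and split on which side of $m_x$ the vertex $w$ (at distance $rad(G)$ from $v$) lies. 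If $w\in[v,m_x]$, thinness gives $q\in[v,x]$ with $d_G(v,q)=rad(G)$ and $d_G(w,q)\le\delta$, so $d_G(w,x)\le\delta+d_G(v,x)-rad(G)\le rad(G)+\delta$ using $d_G(v,x)\le 2rad(G)$. If $w\in[m_x,u]$, thinness gives $q'\in[u,x]$ with $d_G(u,q')=d_G(u,v)-rad(G)$ and $d_G(w,q')\le\delta$, so $d_G(w,x)\le\delta+d_G(u,x)-(d_G(u,v)-rad(G))\le rad(G)+\delta$, now using $d_G(u,x)\le d_G(u,v)$. As $x$ is arbitrary, $ecc_G(w)\le rad(G)+\delta$.

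For (b) it is cleanest to invoke that a $\delta$-thin graph is $\delta$-hyperbolic (Proposition~\ref{hyp_charact}), i.e. among $S_1=d_G(u,v)+d_G(x,y)$, $S_2=d_G(u,x)+d_G(v,y)$, $S_3=d_G(u,y)+d_G(v,x)$ the two largest differ by at most $2\delta$. Since $d_G(u,x),d_G(u,y)\le d_G(u,v)$, I get $S_1-S_2\ge d_G(x,y)-d_G(v,y)$ and $S_1-S_3\ge d_G(x,y)-d_G(v,x)$, hence $S_1-\max\{S_2,S_3\}\ge d_G(x,y)-\max\{d_G(v,x),d_G(v,y)\}$; combined with $S_1-\max\{S_2,S_3\}\le 2\delta$ this yields the claim. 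Part (c) then follows at once: applying (b) to a diametral pair $x,y$ gives $ecc_G(v)\ge\max\{d_G(v,x),d_G(v,y)\}\ge diam(G)-2\delta$, and the second inequality is exactly $diam(G)\ge 2rad(G)-2\delta-1$ from Proposition~\ref{prop:middlepoint-ecc}(c), which applies to $G$ because a diametral pair is mutually distant.

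For (d), note $t\in F(v)$ means $d_G(v,t)=ecc_G(v)$, which by (c) lies in $[2rad(G)-4\delta-1,\,2rad(G)]$, so $\lceil d_G(v,t)/2\rceil\in[rad(G)-2\delta,\,rad(G)]$. Applying (a) to the pair $(v,t)$, the vertex $w$ on the same $(v,t)$-geodesic at distance $rad(G)$ from $t$ has $ecc_G(w)\le rad(G)+\delta$; since $d_G(w,c)=|rad(G)-\lceil d_G(v,t)/2\rceil|\le 2\delta$, I obtain $ecc_G(c)\le ecc_G(w)+d_G(w,c)\le rad(G)+3\delta$. The lower estimate is immediate from $ecc_G(x)\le ecc_G(c)+d_G(x,c)$, giving $d_G(x,c)\ge (rad(G)+k)-(rad(G)+3\delta)=k-3\delta$. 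For the upper estimate I would rerun the thinness/projection argument of (a) inside $\Delta(v,t,x)$, projecting $c$ onto $[t,x]$ or $[v,x]$ according to its side of the split point and using $d_G(v,x),d_G(t,x)\le ecc_G(x)=rad(G)+k$ together with $\lfloor d_G(v,t)/2\rfloor\ge rad(G)-2\delta-1$; both cases yield $d_G(x,c)\le k+3\delta+1$. Setting $k=0$ gives $C(G)\subseteq B(c,3\delta+1)$.

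The routine obstacle throughout is the tripod bookkeeping: in each case one must check on which side of the split point $m_x$ the relevant vertex falls and apply $\delta$-thinness only to the matching pair of sides, and one must confirm the projected point actually lies on the target geodesic. The genuinely delicate point is the upper bound $d_G(x,c)\le k+3\delta+1$ in (d): unlike Proposition~\ref{prop:middlepoint-ecc}, the pair $(v,t)$ need not be mutually distant, so I cannot place $c$ on the far side of $m_x$ and must treat both positions. What saves the argument is the two-sided control $2rad(G)-4\delta-1\le d_G(v,t)\le 2rad(G)$ supplied by (c), which pins the midpoint $c$ within an additive $O(\delta)$ of where a true radius-realizing vertex would sit.
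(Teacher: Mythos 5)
Your proof is correct. Parts (a), (c) and the $ecc_G(c)\leq rad(G)+3\delta$ / lower-bound portions of (d) follow essentially the paper's own route: the same two-case tripod projection inside $\Delta(u,v,x)$ using $d_G(u,x)\le d_G(u,v)$ on the $u$-side and $d_G(v,x)\le 2rad(G)$ on the $v$-side, the same reduction of (c) to (b) on a diametral pair plus Proposition~\ref{prop:middlepoint-ecc}(c) (your remark that a diametral pair is mutually distant is exactly the justification the paper leaves implicit), and the same comparison of $c$ with the vertex $w$ at distance $rad(G)$ from $t$ via (a). Where you genuinely diverge is (b): the paper stays inside the thinness framework, gluing the two triangles $\Delta(u,v,x)$ and $\Delta(u,v,y)$ along the shared side $[u,v]$ and walking $x \to m_x \to m_y \to y$, paying $\delta$ at each crossing; you instead pass through Proposition~\ref{hyp_charact} to the four-point condition and observe that $v\in F(u)$ forces $S_1-S_2\ge d_G(x,y)-d_G(v,y)$ and $S_1-S_3\ge d_G(x,y)-d_G(v,x)$, so the two-largest-sums bound gives the claim. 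This is shorter and avoids coordinating two tripods, and crucially it loses nothing in the constant, since the thin-to-hyperbolic direction of Proposition~\ref{hyp_charact} preserves $\delta$; the paper's version has the merit of keeping the whole section in a single (thinness) language. Finally, in the upper bound of (d) you argue directly at the vertex $c$ and check both sides of the split point, using $\lfloor d_G(v,t)/2\rfloor\ge rad(G)-2\delta-1$ and $\lceil d_G(v,t)/2\rceil\ge rad(G)-2\delta-\frac{1}{2}$ in the two cases, whereas the paper works with the exact midpoint $c^*$, dismisses one case by a symmetry (``without loss of generality'') remark justified by the fact that only the symmetric hypotheses $d_G(v,t)\ge 2rad(G)-4\delta-1$ and $d_G(v,x),d_G(t,x)\le rad(G)+k$ are used, and then pays $\frac{1}{2}$ to move from $c^*$ to $c$. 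Your two-case treatment is a legitimate substitute for that WLOG --- indeed your closing observation that the asymmetry of $\lceil\cdot\rceil$ versus $\lfloor\cdot\rfloor$ is what makes the explicit case split necessary at the vertex level is exactly right --- and both variants land on $d_G(x,c)\le k+3\delta+1$.
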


\begin{proof} (a) Let $x$ be a vertex of $G$ with $d_G(w,x)=ecc_G(w)$. Let $\Delta(u,v,x):=[u,v]\cup [v,x]\cup [x,u]$ be a geodesic triangle, where $[v,x], [x,u]$ are arbitrary geodesics connecting $x$ with $v$ and $u$.
Let $m_x$ be a point on $[u,v]$ which is at distance $(x|u)_v= \frac{1}{2}(d(x,v)+d(v,u)-d(x,u))$ from $v$ and hence at distance \mbox{$(x|v)_u= \frac{1}{2}(d(x,u)+d(v,u)-d(x,v))$} from $u$. We distinguish between two cases: $w$ is between $u$ and $m_x$ or  $w$ is between $v$ and $m_x$ in $[u,v]$.

In the first case, by the triangle inequality and  $d_G(u,x)\leq d_G(u,v)$ (and hence, $(u|x)_v\geq (u|v)_x$), we get
$$d_G(w,x)\leq rad(G)-(u|x)_v+\delta+(u|v)_x\leq rad(G)+\delta.$$

In the second case, by the triangle inequality and since $d_G(v,x)\leq diam(G)\leq 2rad(G)$, we get
\begin{eqnarray}
\nonumber d_G(w,x)& \leq & (u|x)_v-rad(G)+\delta+(u|v)_x \\
\nonumber                      & \leq & d_G(x,v)-rad(G)+\delta \\
\nonumber                      & \leq & 2rad(G)-rad(G)+\delta = rad(G)+\delta.
\end{eqnarray}

(b) Consider an arbitrary $(u,v)$-geodesic $[u,v]$. Let $\Delta(u,v,x):=[u,v]\cup [v,x]\cup [x,u]$ be a geodesic triangle, where $[v,x], [x,u]$ are arbitrary geodesics connecting $x$ with $v$ and $u$. Let $\Delta(u,v,y):=[u,v]\cup [v,y]\cup [y,u]$ be a geodesic triangle, where $[v,y], [y,u]$ are arbitrary geodesics connecting $y$ with $v$ and $u$.

Let $m_x$ be a point on $[u,v]$ which is at distance $(x|u)_v= \frac{1}{2}(d(x,v)+d(v,u)-d(x,u))$ from $v$ and hence at distance \mbox{$(x|v)_u= \frac{1}{2}(d(x,u)+d(v,u)-d(x,v))$} from $u$. Let $m_y$ be a point on $[u,v]$ which is at distance $(y|u)_v= \frac{1}{2}(d(y,v)+d(v,u)-d(y,u))$ from $v$ and hence at distance \mbox{$(y|v)_u= \frac{1}{2}(d(y,u)+d(v,u)-d(y,v))$} from $u$. Without loss of generality, assume that $m_x$ is on $[u,v]$ between $v$ and $m_y$.

Since $d_G(u,v)\geq d_G(u,x)$ (as $v\in F(u)$), we have $(u|v)_x\leq (u|x)_v$. By the triangle inequality,
we get
\begin{eqnarray}
\nonumber d_G(x,y)& \leq & (u|v)_x+\delta+((y|u)_v- (u|x)_v)+\delta+(u|v)_y\\
\nonumber                      & \leq & (u|x)_v-(u|x)_v+2\delta+(y|u)_v +(u|v)_y\\
\nonumber                      & = & d_G(v,y)+2\delta.
\end{eqnarray}

Consequently, $\max\{d_G(v,x),d_G(v,y)\}\geq d_G(v,y)\geq d_G(x,y)-2\delta.$

(c) Now, if $x,y$ is a diametral pair, i.e., $d_G(x,y)=diam(G)$, then, by (b) and Proposition \ref{prop:middlepoint-ecc}(c),
\begin{eqnarray}
\nonumber ecc_G(v)& \geq & \max\{d_G(v,x),d_G(v,y)\} \\
\nonumber                      & \geq & d_G(x,y)-2\delta = diam(G)-2\delta\\
\nonumber                      & \geq & 2rad(G)-4\delta-1.
\end{eqnarray}

(d) Consider any $(v,t)$-geodesic $[v,t]$ and let $c^*$ be the middle point of it, $w$ be a vertex of $[v,t]$  at distance $rad(G)$ from $t$, and
$c$ be a vertex of $[v,t]$ at distance $\lceil\frac{d_G(v,t)}{2}\rceil$ from $t$.
We know by (a) that $ecc_G(w)\leq rad(G)+\delta$. Furthermore, since $2rad(G)\geq d_G(v,t)\geq  2rad(G)-4\delta-1$ (by (c)), \mbox{$rad(G)\geq d_G(t,c)=\lceil\frac{d_G(v,t)}{2}\rceil\geq rad(G)-2\delta$.} Hence, $$d_G(w,c)= d_G(w,t)-d_G(c,t)\leq rad(G)- rad(G)+2\delta=2\delta,$$
implying $$ecc_G(c)\leq d_G(w,c)+ecc_G(w)\leq rad(G)+3\delta.$$

Let now $x$ be an arbitrary vertex from  $C^k(G)$, i.e., $ecc_G(x)\leq rad(G)+k$, for some integer $k\geq 0$.  Consider a geodesic triangle $\Delta(t,v,x):=[t,v]\cup [v,x]\cup [x,t]$, where $[v,x], [x,t]$ are arbitrary geodesics connecting $x$ with $v$ and $t$.
Let $m_x$ be a point on $[t,v]$ which is at distance $(x|t)_v= \frac{1}{2}(d(x,v)+d(v,t)-d(x,t))$ from $v$ and hence at distance $(x|v)_t= \frac{1}{2}(d(x,t)+d(v,t)-d(x,v))$ from $t$. Since, in what follows, we will use only the fact that $d_G(v,t)\geq  2rad(G)-4\delta-1$, we can assume, without loss of generality, that $c^*$  is located on $[t,v]$ between $v$ and $m_x$, i.e., $d(v,c^*)\leq d(v,m_x)=(x|t)_v$.

By the triangle inequality and since $d_G(v,t)\geq 2rad(G)-4\delta-1$  and both $d_G(t,x)$ and $d_G(x,v)$ are at most $rad(G)+k$,  we get
\begin{eqnarray}
\nonumber d_G(x,c^*)& \leq & (t|v)_x+\delta+(x|t)_v-d_G(v,c^*) =  d_G(v,x)-d_G(v,c^*)+\delta\\
\nonumber                      & \leq & rad(G)+k-\frac{d_G(v,t)}{2} +\delta\leq  k+3\delta+\frac{1}{2}.
\end{eqnarray}
Hence, $d_G(x,c)\leq d_G(x,c^*)+\frac{1}{2}\leq k+3\delta+1.$ On the other hand, since $ecc_G(x)\le ecc_G(c)+d_G(x,c)$ and $ecc_G(c)\le rad(G)+3\delta$, we get
\begin{eqnarray}
\nonumber d_G(x,c)& \geq & ecc_G(x) - ecc_G(c) = k+rad(G)-ecc_G(c)\\
\nonumber               & \geq & (k+rad(G)) - (rad(G)+3\delta) = k-3\delta.
\end{eqnarray}
\qed
\end{proof}

\begin{proposition} \label{prop:center-diam}
For every graph $G$ with $\delta$-thin triangles, $diam(C^k(G))\leq 2k+2\delta+1.$ In particular, $diam(C(G))\leq 2\delta+1.$
\end{proposition}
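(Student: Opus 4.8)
The plan is to bound the distance between any two vertices $x,y\in C^k(G)$ by a midpoint argument in the spirit of Proposition \ref{prop:middlepoint-ecc}(a), but applied to a geodesic $[x,y]$ rather than to a geodesic between mutually distant vertices. Set $D:=d_G(x,y)$ and let $m^*$ be the middle point of $[x,y]$ (possibly the midpoint of an edge). First I would show that $m^*$ sees every vertex of $G$ within distance roughly $rad(G)+k+\delta-D/2$; then, since a middle \emph{vertex} $m$ of $[x,y]$ lies within $1/2$ of $m^*$ and satisfies $ecc_G(m)\ge rad(G)$, the inequality $rad(G)\le rad(G)+k+\delta-D/2+1/2$ forces $D\le 2k+2\delta+1$. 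Taking $k=0$ then gives $diam(C(G))\le 2\delta+1$.

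The heart of the argument is the first step. Fix an arbitrary vertex $s$ and consider the geodesic triangle $\Delta(x,y,s)$ built on the chosen geodesic $[x,y]$ and on arbitrary geodesics $[x,s],[y,s]$. Let $m_s$ be the point of $[x,y]$ at distance $(s|x)_y$ from $y$ (equivalently $(s|y)_x$ from $x$). Since $(s|x)_y+(s|y)_x=D$, at least one of these two Gromov products is $\ge D/2$; relabeling $x$ and $y$ if necessary, I may assume $D/2\le (s|x)_y$, so that $m^*$ lies on $[x,y]$ between $y$ and $m_s$. Under the canonical isometry onto the tripod $\Upsilon(x,y,s)$, the subsegment of $[x,y]$ from $y$ to $m_s$ and the initial subsegment of $[y,s]$ both map isometrically onto the $y$-branch; hence $m^*$ (at distance $D/2\le (s|x)_y$ from $y$) and the point $p$ on $[y,s]$ at distance $D/2$ from $y$ have the same image. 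By $\delta$-thinness, $d_G(m^*,p)\le\delta$, and therefore
$$d_G(m^*,s)\le d_G(m^*,p)+d_G(p,s)\le\delta+\big(d_G(y,s)-D/2\big)\le\delta+ecc_G(y)-D/2=rad(G)+k+\delta-D/2,$$
using $d_G(y,s)\le ecc_G(y)=rad(G)+k$. As $s$ was arbitrary, this bounds the distance from $m^*$ to every vertex of $G$.

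To finish, I would pass from the midpoint $m^*$ to a genuine vertex. Let $m$ be a middle vertex of $[x,y]$; then $d_G(m,m^*)\le 1/2$, so $d_G(m,s)\le rad(G)+k+\delta-D/2+1/2$ for every vertex $s$, i.e. $ecc_G(m)\le rad(G)+k+\delta-D/2+1/2$. Since $m$ is a vertex, $ecc_G(m)\ge rad(G)$, and combining the two inequalities yields $D/2\le k+\delta+1/2$, that is $d_G(x,y)=D\le 2k+2\delta+1$, as required.

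I expect the main obstacle to be the correct bookkeeping in the tripod correspondence of the second paragraph: one must choose the endpoint ($x$ or $y$) on whose side the midpoint falls, verify that $D/2$ does not exceed the relevant Gromov product so that $m^*$ really maps into the $y$-branch, and check that the matching point $p$ genuinely lies on the geodesic $[y,s]$, which follows from $D/2\le (s|x)_y\le d_G(y,s)$. It is worth noting that this argument uses only $\delta$-thinness together with the definition of the radius, and in particular does not require the mutually-distant-pair machinery of Propositions \ref{prop:middlepoint-ecc}--\ref{prop:r-vertex-ecc}.
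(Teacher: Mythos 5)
Your proof is correct and follows essentially the same route as the paper: both arguments take the middle point $m^*$ of a geodesic $[x,y]$ between two vertices of $C^k(G)$, apply $\delta$-thinness of the triangle $\Delta(x,y,\cdot)$ to pair $m^*$ with a point at distance $D/2$ from $y$ on the side $[y,\cdot]$, and contradict $ecc_G(m)\ge rad(G)$ unless $D\le 2k+2\delta+1$. The only cosmetic differences are that the paper instantiates the apex as a farthest vertex $z\in F(m)$ and works with the defined eccentricity of the midpoint, whereas you quantify over all vertices $s$ (with the per-$s$ relabeling of $x$ and $y$, harmless since both lie in $C^k(G)$) and absorb the half-integer shift by passing to a middle vertex; the resulting inequality is identical.
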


\begin{proof} Let $x,y$ be two vertices of $C^k(G)$ such that $d_G(x,y)=diam(C^k(G))$. Pick any $(x,y)$-geodesic and consider the middle point $m$ of it.
Let $z$ be a vertex of $G$ such that $d_G(m,z)=ecc_G(m)$. Consider a geodesic triangle $\Delta(x,y,z):=[x,y]\cup [y,z]\cup [z,x]$, where $[z,x], [y,z]$ are arbitrary geodesics connecting $z$ with $x$ and $y$.
Let $m_z$ be a point on $[x,y]$ which is at distance $(x|z)_y= \frac{1}{2}(d(x,y)+d(z,y)-d(x,z))$ from $y$ and hence at distance $(y|z)_x= \frac{1}{2}(d(x,y)+d(z,x)-d(y,z))$ from $x$. Without loss of generality,  we can assume that $m$  is located on $[x,y]$ between $y$ and $m_z$.

Since $ecc_G(y)\leq rad(G)+k$, we have
$$d_G(m,z)=ecc_G(m)\geq rad(G)- \frac{1}{2}\geq ecc_G(y)- k-\frac{1}{2}\geq d_G(y,z) - k-\frac{1}{2}.$$
On the other hand, by the
triangle inequality, we get
\begin{eqnarray}
\nonumber d_G(m,z)& \leq & (x|z)_y-d_G(y,m)+\delta+(x|y)_z=d_G(y,z)- d_G(y,m)+\delta\\
\nonumber               & \leq & d_G(y,z)-\frac{d_G(x,y)}{2}+\delta.
\end{eqnarray}

Hence, $d_G(x,y)\leq 2k + 2\delta+1.$ \qed
\end{proof}

\subsubsection{Diameter and radius.}

For an arbitrary connected graph $G=(V,E)$ and a given vertex $u\in V$, a most distant from $u$ vertex $v\in F(u)$ can be found in linear ($O(|E|)$) time by a {\em breadth-first-search} $BFS(u)$ started at $u$.
A pair of mutually distant vertices of a connected graph $G=(V,E)$ with $\delta$-thin triangles can be computed in $O(\delta |E|)$ total time as follows. By Proposition \ref{prop:r-vertex-ecc}(c), if $v$ is a most distant vertex from an arbitrary vertex $u$ and $t$ is a most distant vertex from $v$, then $d(v,t)\geq \diam(G)-2\delta$. Hence, using at most $O(\delta)$ {\em breadth-first-searches}, one can generate a sequence of vertices $v:=v_1,t:=v_2, v_3, \dots v_k$ with $k\leq 2\delta+2$  such that each $v_i$ is most distant from $v_{i-1}$ (with, $v_0=u$) and $v_k$, $v_{k-1}$ are mutually distant vertices (the initial value $d(v,t)\geq \diam(G)-2\delta$ can be improved at most $2\delta$ times).

Thus, by Proposition \ref{prop:middlepoint-ecc} and Proposition \ref{prop:r-vertex-ecc}, we get the following additive approximations for the radius and the diameter of a graph with $\delta$-thin triangles.

\begin{corollary} \label{th:rad-diam-appr}
Let $G=(V,E)$ be a graph with $\delta$-thin triangles.
\begin{enumerate}
   \item[1.] There is a linear $(O(|E|))$ time algorithm which finds in $G$ a vertex $c$ with eccentricity at most $rad(G)+3\delta$ and a vertex $v$ with eccentricity at least $diam(G)-2\delta$. Furthermore, $C(G)\subseteq B(c,3\delta+1)$ holds.
   \item[2.] There is an almost linear $(O(\delta|E|))$ time algorithm which finds in $G$ a vertex $c$ with eccentricity at most $rad(G)+\delta$. Furthermore, $C(G)\subseteq B(c,2\delta+1)$ holds.
\end{enumerate}
\end{corollary}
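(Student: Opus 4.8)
Both parts will follow directly from the structural estimates already proved in Propositions~\ref{prop:middlepoint-ecc} and~\ref{prop:r-vertex-ecc}; the only work left is to specify which vertices the algorithm outputs and to account for the running times. For Part~1, I would first run a single $BFS(u)$ from an arbitrary vertex $u$ to obtain a most distant vertex $v\in F(u)$ in $O(|E|)$ time. By Proposition~\ref{prop:r-vertex-ecc}(c), this $v$ already satisfies $ecc_G(v)\geq diam(G)-2\delta$, so it serves as the required diameter witness. I would then run a second $BFS(v)$ to obtain a vertex $t\in F(v)$ together with the shortest-path tree rooted at $v$; tracing back the branch from $t$ locates the vertex $c$ of a $(v,t)$-geodesic at distance $\lceil d_G(v,t)/2\rceil$ from $t$. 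Since $v\in F(u)$ and $t\in F(v)$, the hypotheses of Proposition~\ref{prop:r-vertex-ecc}(d) hold verbatim, yielding $ecc_G(c)\leq rad(G)+3\delta$ and $C(G)\subseteq B(c,3\delta+1)$. Two BFS scans plus one walk along a shortest path cost $O(|E|)$ in total.

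For Part~2, I would invoke the $O(\delta|E|)$-time procedure sketched just before the statement to compute a pair $u,v$ of mutually distant vertices, and then take $c$ to be a middle vertex of any $(u,v)$-geodesic. Proposition~\ref{prop:middlepoint-ecc}(b) then gives $ecc_G(c)\leq \lceil d_G(u,v)/2\rceil+\delta\leq rad(G)+\delta$, while Proposition~\ref{prop:middlepoint-ecc}(d), applied to the mutually distant pair $u,v$, gives $C(G)\subseteq B(c,2\delta+1)$.

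The one point that genuinely needs an argument is the termination bound for the mutually-distant-pair procedure underlying Part~2. The key observation is that the sequence of consecutive distances $d(v_{i-1},v_i)=ecc_G(v_{i-1})$ is non-decreasing, since $v_{i+1}\in F(v_i)$ forces $d(v_i,v_{i+1})=ecc_G(v_i)\geq d(v_i,v_{i-1})$, and the procedure halts exactly when two consecutive values coincide, i.e.\ when $v_{i-1},v_i$ become mutually distant. By Proposition~\ref{prop:r-vertex-ecc}(c) the value $d(v_1,v_2)=ecc_G(v_1)$ is already at least $diam(G)-2\delta$, so every later distance lies in the integer range $[diam(G)-2\delta,\,diam(G)]$; since each strict increase consumes one of these at most $2\delta+1$ values, there are at most $2\delta$ improvements and hence $O(\delta)$ scans, giving the claimed $O(\delta|E|)$ bound. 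Everything else is a direct citation of the preceding propositions, so the remaining burden is purely bookkeeping of which geodesic vertex realizes each stated bound.
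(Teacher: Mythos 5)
Your proposal is correct and follows essentially the same route as the paper: one BFS yielding $v\in F(u)$ as the diameter witness via Proposition~\ref{prop:r-vertex-ecc}(c), the midpoint of a $(v,t)$-geodesic with $t\in F(v)$ handled by Proposition~\ref{prop:r-vertex-ecc}(d) for Part~1, and the iterated-BFS computation of a mutually distant pair combined with Proposition~\ref{prop:middlepoint-ecc}(b),(d) for Part~2. Your termination argument for the $O(\delta|E|)$ procedure is just a spelled-out version of the paper's parenthetical remark that the initial distance $d(v,t)\geq diam(G)-2\delta$ can strictly increase at most $2\delta$ times.
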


\subsubsection{All eccentricities.}
In what follows, we will show that all vertex eccentricities of a graph with $\delta$-thin triangles can be also additively approximated in (almost) linear time.

\begin{proposition}\label{prop:ecc-appr-tree} Let $G$ be a graph with $\delta$-thin triangles.
 \begin{enumerate}
   \item[$(a)$] If $c$ is a middle vertex of any $(u,v)$-geodesic between a pair $u,v$ of mutually distant vertices of $G$ and $T$ is a $BFS(c)$-tree of $G$, then, for every vertex $x$ of $G$, $ecc_G(x)\leq ecc_T(x)\leq ecc_G(x)+ 3\delta+1.$
   \item[$(b)$] If $v$ is a most distant vertex from an arbitrary vertex $u$, $t$ is a most distant vertex from $v$,
    $c$ is a vertex of a $(v,t)$-geodesic at distance $\lceil\frac{d_G(v,t)}{2}\rceil$ from $t$ and $T$ is a $BFS(c)$-tree of $G$, then  $ecc_G(x)\leq ecc_T(x)\leq ecc_G(x)+ 6\delta+1.$
 \end{enumerate}
\end{proposition}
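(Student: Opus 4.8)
The plan is to prove the two inequalities separately, treating parts (a) and (b) uniformly and plugging in the two bounds that the earlier propositions have already established: the excess of $ecc_G(c)$ over $rad(G)$, and the distance from an arbitrary vertex to $c$ expressed through its eccentricity layer.

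The left inequality $ecc_G(x)\le ecc_T(x)$ is immediate and holds for \emph{any} spanning tree: since $T$ is a subgraph of $G$, every $(x,y)$-path in $T$ is also a path in $G$, so $d_G(x,y)\le d_T(x,y)$ for all $y$, and taking the maximum over $y$ gives the claim. The whole content is therefore in the right inequality.

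For the right inequality I would exploit the defining property of a $BFS(c)$-tree, namely that it preserves distances to the root: $d_T(c,z)=d_G(c,z)$ for every vertex $z$. Combined with the triangle inequality in $T$, this gives, for every $y$,
\[
d_T(x,y)\le d_T(x,c)+d_T(c,y)=d_G(x,c)+d_G(c,y)\le d_G(x,c)+ecc_G(c),
\]
and hence $ecc_T(x)\le d_G(x,c)+ecc_G(c)$. It now remains to feed in the two stored estimates. Write $ecc_G(x)=rad(G)+k$, i.e.\ $x\in C^k(G)$. For part (a), where $c$ is a middle vertex of a geodesic between mutually distant vertices, Proposition~\ref{prop:middlepoint-ecc}(b) yields $ecc_G(c)\le rad(G)+\delta$ and Proposition~\ref{prop:middlepoint-ecc}(d) yields $d_G(x,c)\le k+2\delta+1$; substituting these, the terms $rad(G)$ and $k$ recombine into $ecc_G(x)$ and the leftover additive error is exactly $3\delta+1$. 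For part (b), the analogous bounds of Proposition~\ref{prop:r-vertex-ecc}(d), namely $ecc_G(c)\le rad(G)+3\delta$ and $d_G(x,c)\le k+3\delta+1$, combine in the same way to leave the additive error $6\delta+1$.

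There is no genuine obstacle here, since the difficulty has been front-loaded into the earlier propositions bounding $ecc_G(c)$ and $d_G(x,c)$; the present statement is essentially an assembly step. The only point requiring care is bookkeeping: one must keep separate the two distinct additive contributions — the slack $d_G(x,c)-k$ coming from how far $x$ may lie from $c$ within its eccentricity layer, and the excess $ecc_G(c)-rad(G)$ — and verify that in each case they sum precisely to the claimed constant rather than to something larger.
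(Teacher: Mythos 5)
Your proposal is correct and follows essentially the same route as the paper's own proof: the left inequality from $T$ being a spanning tree, and the right inequality by bounding $ecc_T(x)\le d_T(x,c)+d_T(c,y)=d_G(x,c)+ecc_G(c)$ via the distance-preserving property of the $BFS(c)$-tree, then substituting the bounds $ecc_G(c)\le rad(G)+\delta$, $d_G(x,c)\le k+2\delta+1$ from Proposition~\ref{prop:middlepoint-ecc}(b),(d) for part (a), and $ecc_G(c)\le rad(G)+3\delta$, $d_G(x,c)\le k+3\delta+1$ from Proposition~\ref{prop:r-vertex-ecc}(d) for part (b). The bookkeeping yielding $3\delta+1$ and $6\delta+1$ matches the paper exactly.
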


\begin{proof} (a) Let $x$ be an arbitrary vertex of $G$ and assume that $ecc_G(x)=rad(G)+k$ for some integer $k\ge 0$. We know from Proposition \ref{prop:middlepoint-ecc}(b) that $ecc_G(c)\leq rad(G)+\delta$. Furthermore, by Proposition \ref{prop:middlepoint-ecc}(d), $d_G(c,x)\leq k+2\delta+1$.
Since $T$ is a $BFS(c)$-tree, $d_G(x,c)=d_T(x,c)$ and $ecc_G(c)=ecc_T(c)$.
Consider a vertex $y$ in $G$ such that $d_T(x,y)=ecc_T(x).$ We have
\begin{eqnarray}
\nonumber ecc_T(x)& = & d_T(x,y) \leq  d_T(x,c)+ d_T(c,y) \\
\nonumber                      & \leq & d_G(x,c)+ecc_T(c) =  d_G(x,c)+ecc_G(c)  \\
\nonumber                      & \leq & k+2\delta+1+rad(G)+\delta  =  rad(G)+k+3\delta+1  \\
\nonumber                      & = & ecc_G(x)+3\delta+1.
\end{eqnarray}
As $T$ is a spanning tree of $G$, evidently,  also $ecc_G(x)\leq ecc_T(x)$ holds.

(b) The proof is similar to the proof of (a); only, in this case, $ecc_G(c)\leq rad(G)+3\delta$ and $d_G(c,x)\leq k+3\delta +1$ holds for every $x\in C^k(G)$ (by Proposition \ref{prop:r-vertex-ecc}(d)). \qed
\end{proof}

A spanning tree $T$ of a graph $G$ is called an {\em eccentricity $k$-approximating spanning tree} if for every vertex $v$ of $G$  $ecc_T(v)\leq ecc_G(v)+ k$ holds~\cite{DrKo2017,Prisner}. Thus, by Proposition  \ref{prop:ecc-appr-tree}, we get.

\begin{theorem} \label{th:ecc-appr-tree}
Every graph $G=(V,E)$ with $\delta$-thin triangles admits an eccentricity $(3\delta+1)$-approximating spanning tree constructible in $O(\delta|E|)$ time and an eccentricity $(6\delta+1)$-approximating spanning tree constructible in $O(|E|)$ time.
\end{theorem}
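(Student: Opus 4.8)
The plan is to observe that Theorem~\ref{th:ecc-appr-tree} is essentially a packaging of Proposition~\ref{prop:ecc-appr-tree} together with the running-time bookkeeping already sketched in the paragraph preceding Corollary~\ref{th:rad-diam-appr}; the two error bounds $3\delta+1$ and $6\delta+1$ come directly from parts (a) and (b) of that proposition. So the task reduces to (i) checking that the trees produced there literally satisfy the definition of an eccentricity $k$-approximating spanning tree, and (ii) bounding the time needed to construct each of them.

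First I would dispatch correctness. Recall that a spanning tree $T$ is eccentricity $k$-approximating if $ecc_T(v)\le ecc_G(v)+k$ for every $v$. In each case the candidate tree is a $BFS$-tree, hence a spanning tree of $G$, so the one-sided inequality $ecc_G(x)\le ecc_T(x)$ is automatic. The matching upper bound is exactly what Proposition~\ref{prop:ecc-appr-tree}(a) asserts for $k=3\delta+1$ (when $c$ is a middle vertex of a geodesic between a pair of mutually distant vertices), and what Proposition~\ref{prop:ecc-appr-tree}(b) asserts for $k=6\delta+1$ (when $c$ lies at distance $\lceil d_G(v,t)/2\rceil$ from $t$ on a $(v,t)$-geodesic obtained from two FP scans). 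Thus each tree is an eccentricity $k$-approximating spanning tree with the claimed $k$.

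Next I would account for the construction time. For the linear-time, $(6\delta+1)$-bound tree I only need the ingredients of Proposition~\ref{prop:ecc-appr-tree}(b): one $BFS(u)$ from an arbitrary vertex to obtain a most distant $v$, one $BFS(v)$ to obtain a most distant $t$ (each scan also records a geodesic and hence lets me locate the vertex $c$ at distance $\lceil d_G(v,t)/2\rceil$ from $t$), and a final $BFS(c)$ to build $T$. Each $BFS$ runs in $O(|E|)$ time and there is a constant number of them, giving $O(|E|)$ overall. For the $(3\delta+1)$-bound tree I invoke the routine described before Corollary~\ref{th:rad-diam-appr}: starting from an arbitrary vertex and repeatedly replacing the current vertex by a most distant one, Proposition~\ref{prop:r-vertex-ecc}(c) guarantees that the distance between consecutive vertices can strictly increase only $O(\delta)$ times before a pair $u,v$ of mutually distant vertices is reached, so $O(\delta)$ scans, i.e.\ $O(\delta|E|)$ time, suffice. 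Locating a middle vertex $c$ of a $(u,v)$-geodesic and running a final $BFS(c)$ add only $O(|E|)$, so the total is $O(\delta|E|)$.

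The step requiring the most care is the termination argument for the mutually-distant-pair computation, since the bound $k\le 2\delta+2$ on the number of scans is what pins the running time to $O(\delta|E|)$; but this is precisely the content of the paragraph preceding Corollary~\ref{th:rad-diam-appr}, resting on Proposition~\ref{prop:r-vertex-ecc}(c), so I would simply cite it rather than reprove it. Everything else is a direct substitution into the two parts of Proposition~\ref{prop:ecc-appr-tree}.
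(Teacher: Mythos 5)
Your proposal is correct and matches the paper's own treatment: the paper derives Theorem~\ref{th:ecc-appr-tree} directly from Proposition~\ref{prop:ecc-appr-tree}(a) and (b), with the $O(\delta|E|)$ bound for computing a pair of mutually distant vertices coming from the iterated-FP-scan argument (at most $2\delta+2$ BFS sweeps, via Proposition~\ref{prop:r-vertex-ecc}(c)) given before Corollary~\ref{th:rad-diam-appr}, and the $O(|E|)$ bound from the constant number of BFS runs in case (b). Your bookkeeping of the two error constants and the running times is exactly the paper's.
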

Theorem \ref{th:ecc-appr-tree} generalizes recent results from~\cite{DrKo2017,Prisner} that chordal graphs and some of their generalizations admit eccentricity 2-approximating spanning trees.

Note that the eccentricities of all vertices in any tree $T=(V,U)$ can be computed in $O(|V|)$ total time. As we noticed already, it is a folklore by now that for trees the following facts are true:
\begin{itemize}
   \item[(1)] The center $C(T)$ of any tree $T$ consists of one vertex or two adjacent vertices.
   \item[(2)] The center $C(T)$ and the radius $rad(T)$ of any tree $T$ can be found in linear time.
   \item[(3)] For every vertex $v\in V$, $ecc_T(v)=d_T(v,C(T))+rad(T)$.
 \end{itemize}
Hence, using $BFS(C(T))$ on $T$ one can compute $d_T(v,C(T))$ for all $v\in V$ in total $O(|V|)$ time. Adding now $rad(T)$ to $d_T(v,C(T))$, one gets  $ecc_T(v)$ for all $v\in V$. Consequently, by
 Theorem \ref{th:ecc-appr-tree}, we get the following additive approximations for the vertex eccentricities in graphs with $\delta$-thin triangles.

\begin{theorem} \label{th:all-ecc--appr}
Let $G=(V,E)$ be a graph with $\delta$-thin triangles.
\begin{enumerate}
   \item[(1)] There is an algorithm which in total linear $(O(|E|))$ time outputs for every vertex $v\in V$ an estimate $\hat{e}(v)$ of its eccentricity $ecc_G(v)$ such that $ecc_G(v)\leq \hat{e}(v)\leq ecc_G(v)+ 6\delta+1.$
   \item[(2)] There is an algorithm which in total almost linear $(O(\delta|E|))$ time outputs for every vertex $v\in V$ an estimate $\hat{e}(v)$ of its eccentricity $ecc_G(v)$ such that $ecc_G(v)\leq \hat{e}(v)\leq ecc_G(v)+ 3\delta+1.$
\end{enumerate}
\end{theorem}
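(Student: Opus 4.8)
The plan is to obtain this result as a direct combination of Theorem \ref{th:ecc-appr-tree} with the three folklore facts (1)--(3) about trees recalled just above the statement. For each of the two parts I would set $\hat{e}(v):=ecc_T(v)$, where $T$ is the eccentricity-approximating spanning tree furnished by Theorem \ref{th:ecc-appr-tree}: for part (1) I take the eccentricity $(6\delta+1)$-approximating spanning tree constructible in $O(|E|)$ time, and for part (2) the eccentricity $(3\delta+1)$-approximating spanning tree constructible in $O(\delta|E|)$ time.

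The correctness of the estimate then rests on two inequalities. On one side, since $T$ is a spanning tree of $G$ we have $d_G(x,y)\le d_T(x,y)$ for all $x,y$, hence $ecc_G(v)\le ecc_T(v)=\hat{e}(v)$, giving the lower bound for free and for every graph. On the other side, the defining property of an eccentricity $k$-approximating spanning tree gives $\hat{e}(v)=ecc_T(v)\le ecc_G(v)+k$ with $k=6\delta+1$ in part (1) and $k=3\delta+1$ in part (2). Combining the two yields $ecc_G(v)\le \hat{e}(v)\le ecc_G(v)+k$ exactly as required.

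It remains to argue the running time, which is where I would be slightly careful to confirm that extracting all tree eccentricities does not dominate the cost of building $T$. Using fact (2), I would compute the center $C(T)$ and the radius $rad(T)$ of $T$ in $O(|V|)$ time; then, running a single $BFS$ from $C(T)$ in $T$, I would obtain $d_T(v,C(T))$ for all $v$ in $O(|V|)$ time, and by fact (3), $\hat{e}(v)=ecc_T(v)=d_T(v,C(T))+rad(T)$ is computed by one addition per vertex. Since $G$ is connected, $|V|=O(|E|)$, so this $O(|V|)$ post-processing is absorbed into the tree-construction time: the total is $O(|E|)$ for part (1) and $O(\delta|E|)$ for part (2).

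I do not expect any genuine obstacle here, as the substantive work has already been done in Proposition \ref{prop:ecc-appr-tree} and Theorem \ref{th:ecc-appr-tree}; the only point worth stating explicitly is that the per-vertex estimate is produced in $O(1)$ amortized time after the linear-time construction of $T$ and its eccentricity profile, which is precisely what fact (3) guarantees for trees.
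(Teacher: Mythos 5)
Your proposal is correct and follows exactly the paper's own route: the paper likewise sets $\hat{e}(v):=ecc_T(v)$ for the spanning tree $T$ from Theorem \ref{th:ecc-appr-tree}, computes all tree eccentricities in $O(|V|)$ time via the folklore facts (1)--(3) (a $BFS$ from $C(T)$ giving $d_T(v,C(T))$ and then $ecc_T(v)=d_T(v,C(T))+rad(T)$), and gets the two-sided bound from $T$ being a spanning tree together with the eccentricity $k$-approximating property. Nothing is missing; your explicit check that the $O(|V|)$ post-processing is absorbed into the tree-construction time matches the paper's accounting.
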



\section{Fast Additive Approximation of All Distances} \label{sec:dist}

Here, we will show that if the $\delta$th power $G^{\delta}$ of a graph $G$ with $\delta$-thin triangles is known in advance, then the distances in $G$ can be additively approximated (with an additive one-sided error of at most $\delta+1$) in $O(|V|^2)$ time. If $G^{\delta}$ is not known, then the distances can be additively approximated (with an additive one-sided error of at most $2\delta+2$) in almost quadratic time.

Our method is a generalization of an unified approach
used in \cite{Dragan05} to estimate (or compute exactly)  all pairs shortest paths in such special graph families as $k$-chordal graphs, chordal graphs, AT-free graphs and many others. For example: all distances in $k$-chordal graphs with an additive one-sided error of at most $k - 1$ can be found in $O(|V|^2)$ time;  all distances in chordal graphs with an additive one-sided error of at most 1 can be found in $O(|V|^2)$ time and  the all pairs shortest path problem on a chordal graph $G$ can be solved in $O(|V|^2)$ time if $G^2$ is known. Note that in chordal graph all geodesic triangles are 2-thin.


\medskip

Let $G=(V,E)$ be a graph with $\delta$-thin triangles. Pick an arbitrary start vertex $s\in V$ and construct a {\em $BFS(s)$-tree} $T$ of $G$ {\em rooted at} $s$. Denote by $p_T(x)$ the {\em parent} and by $h_T(x)=d_T(x,s)=d_G(x,s)$ the {\em height} of a vertex $x$ in $T$. Since we will deal only with one tree $T$, we will often omit the subscript $T$. Let $P_T(x,s):=(x_q,x_{q-1},\dots,x_1,s)$ and $P_T(y,s):=(y_p,y_{p-1},\dots,y_1,s)$ be the paths of $T$ connecting vertices $x$ and $y$ with the root $s$. By $sl_T(x,y;\lambda)$ we denote the largest index $k$ such that $d_G(x_k,y_k)\leq \lambda$ (the $\lambda$ \underline{s}eparation \underline{l}evel). Our method is based on the following simple fact.

\begin{proposition} \label{prop:formula}
For every vertices $x$ and $y$ of a graph $G$ with $\delta$-thin triangles and any $BFS$-tree $T$ of $G$, $$h_T(x)+h_T(y)-2k-1\leq d_G(x,y)\leq h_T(x)+h_T(y)-2k+d_G(x_k,y_k),$$ where $k=sl_T(x,y;\delta)$.
\end{proposition}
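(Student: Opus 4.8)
The plan is to fix vertices $x$ and $y$, set $k=sl_T(x,y;\delta)$, and work with the tree-paths $P_T(x,s)=(x_q,\dots,x_1,s)$ and $P_T(y,s)=(y_p,\dots,y_1,s)$ together with the level-$k$ ancestors $x_k$ and $y_k$. Since $T$ is a $BFS$-tree, these tree-paths are geodesics of $G$, so $d_G(x,x_k)=h_T(x)-k$ and $d_G(y,y_k)=h_T(y)-k$. The upper bound is the easy half: by the triangle inequality along the concatenation $x\leadsto x_k\leadsto y_k\leadsto y$ I get
\[
d_G(x,y)\le d_G(x,x_k)+d_G(x_k,y_k)+d_G(y_k,y)=(h_T(x)-k)+d_G(x_k,y_k)+(h_T(y)-k),
\]
which is exactly the claimed right-hand side. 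No hyperbolicity is needed here.

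The lower bound is where thinness enters. First, the triangle inequality gives the trivial bound $d_G(x,y)\ge d_G(x,x_k)+d_G(y,y_k)-d_G(x_k,y_k)=h_T(x)+h_T(y)-2k-d_G(x_k,y_k)$, but this is too weak because $d_G(x_k,y_k)$ can be as large as $\delta$. The real content is that $d_G(x_k,y_k)$ must in fact be close to $1$, not just $\le\delta$, by the \emph{maximality} of $k=sl_T(x,y;\delta)$: at the next level up, $d_G(x_{k+1},y_{k+1})>\delta$. The key step I would carry out is to apply the thinness of a geodesic triangle $\Delta(x,y,s)$ (or more directly a comparison argument on the two tree-geodesics $[x,s]$ and $[y,s]$, which share the common endpoint $s$). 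The vertices $x_{k+1},y_{k+1}$ lie one step deeper than $x_k,y_k$ along these two geodesics from $s$; thinness forces pairs of points equidistant from $s$ on the two sides to stay within $\delta$ of each other once they are past the ``fellow-travelling'' portion, and the jump from $d_G(x_{k+1},y_{k+1})>\delta$ down to $d_G(x_k,y_k)\le\delta$ pins the relevant quantity. Concretely, I expect to show that $d_G(x,y)\ge h_T(x)+h_T(y)-2k-1$ by arguing that if $d_G(x,y)$ were smaller, the Gromov product $(x\,|\,y)_s$ would exceed $k$, which would make $x_{k+1}$ and $y_{k+1}$ close (contradicting $d_G(x_{k+1},y_{k+1})>\delta$); since $(x|y)_s=\tfrac12(h_T(x)+h_T(y)-d_G(x,y))$, bounding this product above by $k+\tfrac12$ rearranges directly into the desired inequality.

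The main obstacle, and the step needing the most care, is turning the \emph{thinness} hypothesis into the clean statement that $(x|y)_s\le k+\tfrac12$. The subtlety is that thinness is a statement about the \emph{chosen} geodesic triangle $\Delta(x,y,s)$, whereas $x_k,y_k$ live on the specific $BFS$-geodesics $[x,s]$ and $[y,s]$; I must make sure the fellow-traveller/meeting-point $m_s$ of the tripod associated to $\Delta(x,y,s)$ is at tree-distance essentially $(x|y)_s$ from $s$ along both sides, and then use that points past $m_s$ on the two geodesics map to distinct tripod points and hence, by $\delta$-thinness, the points at the \emph{same} level $k+1$ are within $\delta$ only if $k+1$ is still at or below the meeting level. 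This reconciliation between the abstract tripod picture and the integer levels of the $BFS$-tree (where one must absorb the discrepancy into the additive $-1$) is the delicate bookkeeping; everything else is the triangle inequality.
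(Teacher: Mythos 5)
Your proposal is correct and takes essentially the same route as the paper: the upper bound is the identical triangle inequality, and for the lower bound the paper likewise forms the geodesic triangle $\Delta(x,y,s)$ with the two BFS tree-paths (which are geodesics of $G$) as sides and uses $\delta$-thinness to conclude $sl_T(x,y;\delta)\ge (x|y)_s-\frac{1}{2}$, which is exactly your inequality $(x|y)_s\le k+\frac{1}{2}$ rearranged into $d_G(x,y)\ge h_T(x)+h_T(y)-2k-1$ --- you phrase the thinness step contrapositively via maximality of $k$ (if $(x|y)_s\ge k+1$ then $x_{k+1}$ and $y_{k+1}$ are identified by the tripod map, hence within $\delta$, a contradiction), while the paper states it directly, but these are the same argument. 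Your bookkeeping worry resolves just as you anticipate: since $2(x|y)_s$ is an integer, rounding down to the nearest integer level costs at most $\frac{1}{2}$, which is precisely the source of the additive $-1$.
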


\begin{proof} By the triangle inequality, $d_G(x,y)\leq d_G(x,x_k)+ d_G(x_k,y_k)+ d_G(y_k,y)= h_T(x)+h_T(y)-2k+d_G(x_k,y_k)$.  Consider now an arbitrary $(x,y)$-geodesic $[x,y]$ in $G$. Let $\Delta(x,y,s):=[x,y]\cup [x,s]\cup [y,s]$ be a geodesic triangle, where $[x,s]=
P_T(x,s)$ and $[y,s]=P_T(y,s)$. Since $\Delta(x,y,s)$ is $\delta$-thin,  $sl_T(x,y;\delta)\ge (x|y)_s-\frac{1}{2}$. Hence, $h_T(x)- sl_T(x,y;\delta)\le (s|y)_x+\frac{1}{2}$ and $h_T(y)- sl_T(x,y;\delta)\le (s|x)_y+\frac{1}{2}$. As $d_G(x,y)=(s|y)_x+(s|x)_y$, we get $d_G(x,y)\ge h_T(x)- sl_T(x,y;\delta) + h_T(y)- sl_T(x,y;\delta)-1$. \qed
\end{proof}

Note that we may regard $BFS(s)$ as having produced a numbering from $n$ to 1 in decreasing order of the vertices in $V$ where vertex $s$ is numbered $n$. As a vertex is placed in the queue by $BFS(s)$, it is given the next available number. The last vertex visited is given the number 1.  Let $\sigma:=[v_1, v_2,\dots, v_n=s]$ be a $BFS(s)$-ordering of the vertices of $G$ and $T$ be a $BFS(s)$-tree of $G$ produced by a $BFS(s)$. Let $\sigma(x)$ be the number assigned to a vertex $x$ in this $BFS(s)$-ordering. For two vertices $x$ and $y$, we write $x<y$ whenever $\sigma(x)<\sigma(y)$.

First, we will show that if $G^{\delta}$ is known in advance (i.e., its adjacency matrix is given) for a graph $G$ with $\delta$-thin triangles, then the distances in $G$ can be additively approximated (with an additive one-sided error of at most $\delta+1$) in $O(|V|^2)$ time.
We consider the vertices of $G$ in the order $\sigma$ from 1 to $n$. For each current vertex $x$ we show that the values $\widehat{d}(x,y):=h_T(x)+h_T(y)-2sl_T(x,y;\delta)+\delta$
for all vertices $y$ with $y>x$ can be computed in $O(|V|)$ total time.  By Proposition \ref{prop:formula}, $$d_G(x,y)\leq \widehat{d}(x,y)\leq 
d_G(x,y)+\delta+1.$$

The values $\widehat{d}(x,y)$ for all $y$ with $y>x$ can be computed using the following simple procedure. We will omit the subscripts $G$ and $T$ if no ambiguities arise. Let also $L_i= \{v\in V: d(v,s)=i\}$. In the procedure, $S_u$ represents vertices of a subtree of $T$ rooted at $u$.

\medskip

\noindent
(01)\hspace*{1cm} set $q:=h(x)$ \\
(02)\hspace*{1cm} define a set $S_u:=\{u\}$ for each vertex $u\in L_q$, $u>x$,  and denote this family of sets by $\cal F$  \\
(03)\hspace*{1cm} {\bf for} $k=q$ downto 0 \\
(04)\hspace*{1.5cm} let $x_k$ be the vertex from $L_k\cap P_T(x,s)$ \\
(05)\hspace*{1.5cm} {\bf for} each vertex $u\in L_k$ with $u>x$ \\
(06)\hspace*{2cm} {\bf if} $d_G(u,x_k)\leq \delta$ (i.e., $u=x_k$ or $u$ is adjacent to $x_k$ in $G^{\delta}$) {\bf then}\\
(07)\hspace*{2.5cm} {\bf for} every $v\in S_u$ \\
(08)\hspace*{3cm}   set $\widehat{d}(x,v):=h(x)+h(v)-2k+\delta$ and  remove $S_u$ from $\cal F$ \\
(09)\hspace*{2.5cm} {\bf endfor} \\
(10)\hspace*{1.5cm} {\bf endfor} \\
(11)\hspace*{1.5cm} /*  update $\cal F$ for the next iteration  */ \\
(12)\hspace*{1.5cm} {\bf if} $k>0$ {\bf then}  \\
(13)\hspace*{2cm} {\bf for} each vertex $u\in L_{k-1}$\\
(14)\hspace*{2.5cm} combine all sets $S_{u_1},\dots,S_{u_\ell}$  from $\cal F$  ($\ell\ge 0$), such that $p_T(u_1)=\ldots=p_T(u_\ell)=u$, \\
(15)\hspace*{2.5cm} into one new set $S_u:=\{u\}\cup S_{u_1}\cup\ldots\cup S_{u_\ell}$   ~~~~~~ /*  when $\ell= 0$, $S_u:=\{u\}$  */
(16)\hspace*{2cm} {\bf endfor} \\
(17)\hspace*{1cm} {\bf endfor} \\
(18)\hspace*{1cm} set also $\widehat{d}(x,s):=h(x)$.

\medskip

Thus, we have the following result.

\begin{theorem} \label{th:dist-power}
Let $G=(V,E)$ be a graph with $\delta$-thin triangles. Given $G^{\delta}$, all distances in $G$ with an additive one-sided error of at
most $\delta+1$ can be found in $O(|V|^2)$ time.
\end{theorem}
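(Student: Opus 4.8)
The plan is to verify that the displayed procedure correctly computes, for every ordered pair $x<y$, the quantity $\widehat{d}(x,y)=h_T(x)+h_T(y)-2\,sl_T(x,y;\delta)+\delta$, and then to invoke Proposition~\ref{prop:formula} to bound the error. Indeed, writing $k=sl_T(x,y;\delta)$, the defining property of $sl_T$ gives $d_G(x_k,y_k)\le\delta$, so the upper bound of Proposition~\ref{prop:formula} yields $d_G(x,y)\le h_T(x)+h_T(y)-2k+\delta=\widehat{d}(x,y)$, while its lower bound $d_G(x,y)\ge h_T(x)+h_T(y)-2k-1$ rearranges to $\widehat{d}(x,y)\le d_G(x,y)+\delta+1$. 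Hence, once correctness of the computed $sl_T$-value is established, the additive one-sided error of at most $\delta+1$ follows immediately, and the remaining work is purely to justify the set-merging bookkeeping and its running time.

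First I would record two consequences of the $BFS$ numbering: if $y>x$ then $h_T(y)\le h_T(x)=q$, since deeper vertices are enqueued later and therefore receive smaller numbers; and, for the same reason, every proper ancestor $w$ of a vertex $v>x$ satisfies $w>x$. This justifies starting the outer loop at level $q$ and explains why the family $\cal F$ never needs to carry vertices below the current level. The heart of the argument is a loop invariant: at the start of iteration $k$, the family $\cal F$ consists of one set $S_u$ per vertex $u\in L_k$ that lies on a root-path of some still-unresolved target, where $S_u$ is exactly the set of as-yet-unresolved vertices $v>x$ in the subtree of $T$ rooted at $u$, together with $u$ itself. In particular, for every $v\in S_u$ the ancestor $v_k$ of $v$ at level $k$ equals $u$, so the single membership test $d_G(u,x_k)\le\delta$ in line~(06) is simultaneously the test $d_G(v_k,x_k)\le\delta$ for all members of $S_u$. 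The merging step~(13)--(15) preserves this invariant, since passing from level $k$ to $k-1$ glues the unresolved subtrees rooted at the children of each $u\in L_{k-1}$ onto $u$.

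Granting the invariant, I would argue that each target $v$ is removed from $\cal F$ at exactly level $sl_T(x,v;\delta)$. Because the loop descends in $k$ and deletes $v$ the first time $d_G(v_k,x_k)\le\delta$ holds, $v$ survives every level $k'$ with $d_G(v_{k'},x_{k'})>\delta$ and is deleted at the largest $k$ with $d_G(v_k,x_k)\le\delta$, which is the definition of $sl_T(x,v;\delta)$ (note $sl_T(x,v;\delta)\le h_T(v)$, the level at which $v$ enters $\cal F$); at $k=0$ one has $x_0=s=v_0$, so $d_G(x_0,v_0)=0\le\delta$ and every target is certainly resolved. At its resolution level the procedure assigns $\widehat{d}(x,v)=h_T(x)+h_T(v)-2k+\delta$ with $k=sl_T(x,v;\delta)$, as required. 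For the running time at a fixed source $x$, I would check it is $O(|V|)$: the adjacency matrix of $G^{\delta}$ makes each test in~(06) an $O(1)$ lookup, and there are $\sum_k|L_k|=O(|V|)$ of them; each vertex is output once, when resolved, giving $O(|V|)$ resolution work; and, representing the sets $S_u$ as linked lists with tail pointers, each vertex is spliced into its parent's set in exactly one merge, so the bookkeeping of~(13)--(15) is $O(|V|)$ amortized over the whole run for $x$. Summing over the $n$ choices of source then yields the claimed $O(|V|^2)$ bound.

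The main obstacle I anticipate is the second paragraph: formulating and proving the loop invariant precisely, in particular that $S_u$ always equals the unresolved part of the subtree rooted at $u$ and that no target is lost, duplicated, or prematurely deleted during the successive merges. Everything else --- the error estimate via Proposition~\ref{prop:formula} and the linear-per-source running time --- falls out cleanly once the invariant certifies that the deletion level of each $v$ is $sl_T(x,v;\delta)$.
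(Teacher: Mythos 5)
Your proposal is correct and follows essentially the same route as the paper: the error bound is exactly the paper's application of Proposition~\ref{prop:formula} to $\widehat{d}(x,y)=h_T(x)+h_T(y)-2\,sl_T(x,y;\delta)+\delta$, and the procedure you analyze is the paper's own set-merging scan over BFS layers. The loop invariant and the linked-list accounting you supply are details the paper leaves implicit in its pseudo-code, and they are stated correctly (in particular, the observations that $y>x$ forces $h_T(y)\le h_T(x)$, that ancestors of targets are themselves numbered above $x$, and that the first successful test during the descent occurs at level $sl_T(x,y;\delta)$).
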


To avoid the requirement that $G^{\delta}$ is given in advance, we can use any known fast constant-factor approximation algorithm that
in total $T(|V|)$-time computes for every pair of vertices $x,y$ of  $G$ a value $\widetilde{d}(x,y)$ such that $d_G(x,y)\le \widetilde{d}(x,y)\le \alpha d_G(x,y)+\beta$. We can show that, using such an algorithm as a preprocessing step, the distances in a graph $G$ with $\delta$-thin triangles can be additively approximated with an additive one-sided error of at most $\alpha\delta+\beta+1$ in $O(T(|V|)+|V|^2)$ time.

Although one can use any known fast constant-factor approximation algorithm in the preprocessing step, in what follows, we will demonstrate our idea using a fast approximation algorithm from \cite{BeKa}. It computes in $O(|V|^2\log^2|V|)$ total time for every pair $x,y$ a value  $\widetilde{d}(x,y)$ such that $$d_G(x,y)\le \widetilde{d}(x,y)\le 2 d_G(x,y)+1.$$

Assume that the values $\widetilde{d}(x,y)$, $x,y\in V$,  are  precomputed. By $\widetilde{sl}_T(x,y;\lambda)$ we denote now the largest index $k$ such that $\widetilde{d}_G(x_k,y_k)\leq \lambda$. We have

\begin{proposition} \label{prop:formula2}
For every vertices $x$ and $y$ of a graph $G$ with $\delta$-thin triangles, any integer $\rho \ge \delta$, and any $BFS$-tree $T$ of $G$, $$h_T(x)+h_T(y)-2k-1\leq d_G(x,y)\leq h_T(x)+h_T(y)-2k+d_G(x_k,y_k),$$ where $k=\widetilde{sl}_T(x,y;2\rho+1)$.
\end{proposition}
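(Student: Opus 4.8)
The plan is to mimic the proof of Proposition~\ref{prop:formula} almost verbatim, the only genuinely new ingredient being a comparison between the \emph{approximate} separation level $\widetilde{sl}_T(x,y;2\rho+1)$ used here and the \emph{exact} level $sl_T(x,y;\delta)$ used there. First I would dispose of the upper bound, which requires nothing new. The index $k=\widetilde{sl}_T(x,y;2\rho+1)$ is admissible in the sense that $0\le k\le\min\{h_T(x),h_T(y)\}$: it is nonnegative because $x_0=y_0=s$ and $\widetilde{d}(s,s)=0\le 2\rho+1$, and it is at most the shorter path length since it is an index common to both root paths. For any such index the triangle inequality gives
\[
d_G(x,y)\le d_G(x,x_k)+d_G(x_k,y_k)+d_G(y_k,y)=h_T(x)+h_T(y)-2k+d_G(x_k,y_k),
\]
exactly as in Proposition~\ref{prop:formula}, and this holds for any valid index, in particular for our $k$.

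The substance is the lower bound, which is equivalent to showing $k\ge (x|y)_s-\tfrac12$. Here I would reuse the thin-triangle argument unchanged: form the geodesic triangle $\Delta(x,y,s)$ whose two sides incident to $s$ are the tree paths $P_T(x,s)$ and $P_T(y,s)$ and whose third side is an arbitrary $(x,y)$-geodesic. Under the isometry $\varphi$ onto the tripod, for every integer level $i\le (x|y)_s$ the vertices $x_i$ and $y_i$ lie before the branch point on their respective sides, so $\varphi(x_i)=\varphi(y_i)$, and $\delta$-thinness yields $d_G(x_i,y_i)\le\delta$. This is precisely the estimate that drove Proposition~\ref{prop:formula}.

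The new step — and the only place where the hypotheses $\rho\ge\delta$ and $d_G\le\widetilde{d}\le 2d_G+1$ enter — is the passage from exact to approximate distances. Since $\rho\ge\delta$, each of the above pairs satisfies $d_G(x_i,y_i)\le\delta\le\rho$, whence $\widetilde{d}(x_i,y_i)\le 2d_G(x_i,y_i)+1\le 2\rho+1$. Thus \emph{every} integer level $i\le (x|y)_s$ is counted by the approximate separation level, so $k=\widetilde{sl}_T(x,y;2\rho+1)\ge\lfloor (x|y)_s\rfloor\ge (x|y)_s-\tfrac12$. Combining this with the identity $d_G(x,y)=h_T(x)+h_T(y)-2(x|y)_s$ gives $h_T(x)+h_T(y)-2k-1\le d_G(x,y)$, completing the lower bound.

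The main — and really the only — obstacle is calibrating the threshold so that the distortion of $\widetilde{d}$ cannot shrink the separation level below the point where $\delta$-thinness guarantees small true distances. The value $2\rho+1$ with $\rho\ge\delta$ is exactly the worst-case image of true distances up to $\rho$ under $t\mapsto 2t+1$, which forces $\widetilde{sl}_T(x,y;2\rho+1)\ge sl_T(x,y;\delta)$ and lets the clean computation of Proposition~\ref{prop:formula} transfer without change; a minor point to keep straight is the half-integrality of $(x|y)_s$, absorbed by the floor above.
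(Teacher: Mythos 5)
Your proof is correct and follows essentially the same route as the paper's: the upper bound is the same triangle-inequality computation as in Proposition~\ref{prop:formula}, and the lower bound rests on the identical key observation that for every level $i\le (x|y)_s$ thinness gives $d_G(x_i,y_i)\le\delta\le\rho$, hence $\widetilde{d}(x_i,y_i)\le 2\rho+1$ and $\widetilde{sl}_T(x,y;2\rho+1)\ge (x|y)_s-\frac{1}{2}$. If anything, your write-up is more explicit than the paper's (which merely says the proof is ``identical'' to that of Proposition~\ref{prop:formula} and states the one new inequality), since you spell out the tripod argument, the admissibility of the index $k$, and the half-integrality of $(x|y)_s$.
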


\begin{proof} The proof is identical to the proof of Proposition \ref{prop:formula2}. One needs only to notice the following. In a geodesic triangle $\Delta(x,y,s):=[x,y]\cup [x,s]\cup [y,s]$ with $[x,s]=P_T(x,s)=(x_q,x_{q-1},\dots,x_1,s)$ and $[y,s]=P_T(y,s)=(y_p,y_{p-1},\dots,y_1,s)$, for each $i\leq (x|y)_s$, $d_G(x_i,y_i)\leq \delta\le \rho$ and, hence, $\widetilde{d}(x_i,y_i)\leq 2\rho+1$ holds. Therefore,  $\widetilde{sl}_T(x,y;2\rho+1)\ge (x|y)_s-\frac{1}{2}$. \qed
\end{proof}

Let $\rho$ be any integer greater than or equal to $\delta$. By replacing in our earlier procedure lines (06) and (08) with
\medskip

\noindent
(06)$'$\hspace*{1cm} {\bf if} $\widetilde{d}(u,x_k)\leq 2\rho+1$ {\bf then}\\
(08)$'$\hspace*{1cm}   set $\widehat{d}(x,v):=h(x)+h(v)-2k+2\rho+1$ and  remove $S_u$ from $\cal F$ \\
\medskip

\noindent
we will compute for each current vertex $x$ all values $\widehat{d}(x,y):=h_T(x)+h_T(y)-2\widetilde{sl}_T(x,y;2\rho+1)+2\rho+1$, $y>x$,
in $O(|V|)$ total time.  By Proposition \ref{prop:formula2},
\begin{eqnarray}
\nonumber  d_G(x,y) &\leq &  h_T(x)+h_T(y)-2\widetilde{sl}_T(x,y;2\rho+1)+d_G(x_k,y_k) \\
\nonumber   &\leq & h_T(x)+h_T(y)-2\widetilde{sl}_T(x,y;2\rho+1)+\widetilde{d}(x_k,y_k) \\
\nonumber  &\leq & h_T(x)+h_T(y)-2\widetilde{sl}_T(x,y;2\rho+1)+2\rho+1 \\
\nonumber   &=& \widehat{d}(x,y)
\end{eqnarray}
and
\begin{eqnarray}
\nonumber  \widehat{d}(x,y) &= &  h_T(x)+h_T(y)-2\widetilde{sl}_T(x,y;2\rho+1)+2\rho+1 \\
\nonumber   &\leq & d_G(x,y)+2\rho+2.
\end{eqnarray}

\medskip

Thus, we have the following result:

\begin{theorem}  \label{th:dist}
Let $G=(V,E)$ be a graph with $\delta$-thin triangles.
 \begin{enumerate}
   \item[$(a)$] If the value of $\delta$ is known, then all distances in $G$ with an additive one-sided error of at
most $2\delta+2$ can be found in $O(|V|^2\log^2|V|)$ time.
   \item[$(b)$] If an approximation $\rho$ of $\delta$ such that $\delta\le \rho\le a\delta+b$ is known (where $a$ and $b$ are constants), then all distances in $G$ with an additive one-sided error of at most $2(a\delta+b+1)$ can be found in $O(|V|^2\log^2|V|)$ time.
 \end{enumerate}
\end{theorem}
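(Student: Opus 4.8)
The plan is to assemble the machinery already developed in this section into the two claimed bounds, choosing the parameter $\rho$ appropriately in each case and then bookkeeping the running time. The substantive work has in fact already been done: Proposition~\ref{prop:formula2} and the two displayed chains of inequalities preceding the theorem show that, for any integer $\rho\ge\delta$, the quantity $\widehat{d}(x,y)=h_T(x)+h_T(y)-2\widetilde{sl}_T(x,y;2\rho+1)+2\rho+1$ satisfies $d_G(x,y)\le \widehat{d}(x,y)\le d_G(x,y)+2\rho+2$. So the theorem is essentially a corollary, and the proof amounts to instantiating $\rho$ and adding up the costs.

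For part $(a)$ I would set $\rho=\delta$, which is permitted since Proposition~\ref{prop:formula2} only requires $\rho\ge\delta$. The algorithm runs in three phases. First, as preprocessing, invoke the algorithm of \cite{BeKa} to compute in $O(|V|^2\log^2|V|)$ time the values $\widetilde{d}(x,y)$ with $d_G(x,y)\le\widetilde{d}(x,y)\le 2d_G(x,y)+1$. Second, build a single $BFS(s)$-tree $T$ rooted at an arbitrary vertex $s$ in $O(|E|)$ time, recording heights $h_T(\cdot)$ and parents $p_T(\cdot)$. Third, sweep the vertices in the $BFS$-ordering $\sigma$ and, for each current vertex $x$, run the modified procedure (with lines $(06)'$ and $(08)'$) to output $\widehat{d}(x,y)$ for all $y>x$. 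Since each call costs $O(|V|)$, the third phase is $O(|V|^2)$, the preprocessing dominates, and the total is $O(|V|^2\log^2|V|)$; with $\rho=\delta$ the error bound reads $d_G(x,y)\le\widehat{d}(x,y)\le d_G(x,y)+2\delta+2$, as required. For part $(b)$ I would run the identical algorithm but instantiate the comparison in line $(06)'$ with the given approximation $\rho$, where $\delta\le\rho\le a\delta+b$. The lower bound $\rho\ge\delta$ again validates Proposition~\ref{prop:formula2}, so the same inequalities yield an additive error of at most $2\rho+2\le 2(a\delta+b)+2=2(a\delta+b+1)$, while the running time is unchanged since nothing beyond that single comparison depends on the value of $\rho$.

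The only point that genuinely needs care — and hence the main obstacle — is the assertion that one call of the modified procedure runs in $O(|V|)$ time, since this is precisely what multiplies out to $O(|V|^2)$ rather than something worse. I would justify it by an amortized analysis of the set family $\mathcal{F}$: at any moment each vertex $v>x$ lies in exactly one set $S_u$, it receives its value $\widehat{d}(x,v)$ at most once, namely when its current set first passes the separation test and is deleted from $\mathcal{F}$ in line $(08)'$, and the sibling-subtree merges of lines $(13)$--$(15)$ simply follow the parent pointers of $T$. Consequently each processed vertex and each relevant tree edge is touched a constant number of times over the whole downward sweep, so one call is linear, and summing over all $x$ gives the $O(|V|^2)$ bound that completes the time analysis.
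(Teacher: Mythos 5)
Your proposal is correct and follows essentially the same route as the paper: the theorem is obtained exactly as you describe, by running the \cite{BeKa} preprocessing in $O(|V|^2\log^2|V|)$ time, applying Proposition~\ref{prop:formula2} and the displayed inequality chains with $\rho=\delta$ for part $(a)$ and with the given $\rho\le a\delta+b$ for part $(b)$, and sweeping the modified procedure (lines $(06)'$ and $(08)'$) in $O(|V|)$ time per vertex. Your amortized justification of the per-call linear bound via the disjoint set family $\mathcal{F}$ is a welcome elaboration of a step the paper asserts without detail, but it does not change the argument.
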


The second part of Theorem \ref{th:dist} says that if an approximation of the thinness parameter of a graph $G$ is given then all distances in $G$ can be additively approximated in $O(|V|^2\log^2|V|)$ time.
Recently, it was shown in \cite{fast-appr-hyp} that the following converse is true. From an estimate of all distances in $G$ with an additive one-sided error of at most $k$, it is possible to compute in $O(|V|^2)$ time an estimation $\rho^*$ of the thinness of $G$ such that $\delta \le \rho^*\le 8\delta+12k+4,$ proving a $\tilde{O}(|V|^2)$-equivalence between approximating the thinness and approximating the distances in graphs.

\section{Experimentation on Some Real-World Networks} \label{sec:experim}
In this section, we analyze the performance of our algorithms for approximating eccentricities and distances on a number of real-world networks. Our experimental results show that the estimates on eccentricities and distances obtained are even better than the theoretical bounds described in Corollary \ref{th:rad-diam-appr} and Theorems  \ref{th:all-ecc--appr},\ref{th:dist}.

\begin{table}[htb]
\begin{center}{\fontsize{7}{9} \selectfont
\begin{tabular}{|lcr|rrrr|rrrr|r|}
\hline
\\[-0.25cm]
Network & Type & Ref. & $|V|$ & $|E|$ & $|C(G)|$ & $\overline{deg}$ & $rad(G)$ & $diam(G)$ & $diam_G(C(G))$ & connected? & $\delta(G)$\\ [0.1cm] \hline


{\sc dutch-elite} & \multirow{6}{*}{social} &\cite{Pajek} &  3621 & 4310 & 3 & 2.4 & 12 & 22 & 4 & no & 5\\
{\sc facebook} &  & \cite{leskovec2012learning} & 4039 & 88234 & 1 & 43.7 & 4 & 8 & 0 & yes & 1.5   \\
{\sc eva} &  & \cite{Pajek} & 4475 & 4664 & 15 & 2.1 & 10 & 18 & 3 & yes & 3.5\\
{\sc slashdot}  &  & \cite{leskovec2009community} & 77360 & 905468 & 1 & 13.1 & 6 & 12 & 0 & yes & *1.5\\
{\sc loans}  &  & \cite{loans} & 89171 & 3394979 & 29350 & 74.69 & 5 & 8 & 4 & yes & \\
{\sc twitter} &  & \cite{twitter} & 465017 & 834797 & 755 & 3.59 & 5 & 8 & 4 & yes & \\

\hline

{\sc email-virgili} &  & \cite{guimera2003self} & 1133 & 5451 & 215 & 9.6  & 5 & 8  & 4 & yes & 2  \\
{\sc email-enron}  & \multirow{1}{*}{communi-} & \cite{leskovec2009community,klimmt2004introducing} & 33696 & 180811 & 248 & 10.7 & 7 & 13 & 2 & yes &\\
{\sc email-eu} & \multirow{1}{*}{cation}  & \cite{leskovec2007graph}  & 224832 & 680720 & 1 & $\approx 3$ & 7 & 14 & 0 & yes & \\
{\sc wikitalk-china} &  & \cite{wiki-china} & 1217365 & 3391055 & 17 & 2.9 & 4 & 8 & 2 & yes & \\
\hline

{\sc cs-metabolic} & \multirow{4}{*}{biological} & \cite{CEMeta} & 453 & 4596 & 17 & 8.9 & 4 & 7 & 2 & yes & 1.5 \\
{\sc sc-ppi} &  & \cite{ppi} & 1458 & 1948 & 48 & 2.7 & 11 & 19  & 6 & no & 3.5\\
{\sc yeast-ppi} &  & \cite{yeast} & 2224 & 6609 & 57 & $\approx 6$ & 6 & 11 & 4 & no & 2.5 \\
{\sc homo-pi} &  & \cite{biogrid} & 16635 & 115364 & 135 & 13.87 & 5 & 10 & 2 & no & 2 \\ \hline

{\sc as-graph-1} & \multirow{6}{*}{internet} & \cite{ASGraphs} & 3015 & 5156 & 32 & 3.4 & 5 & 9 & 2 & yes & 2 \\
{\sc as-graph-2} &  & \cite{ASGraphs} & 4885 & 9276 & 531 & 3.8 & 6 & 11 & 4 & no & 3 \\
{\sc as-graph-3} &  & \cite{ASGraphs} & 5357 & 10328 &  10 & 3.9 & 5 & 9 & 2 & yes & 2\\
{\sc routeview}  &  & \cite{oregon} & 10515 & 21455 & 2 & 4.1 & 5 & 10 & 2 & no & 2.5 \\
{\sc as-caida}  &  & \cite{caida} & 26475 & 53381 & 2 & 4.03 & 9 & 17 &  1 & yes & 2.5 \\
{\sc itdk}  &  & \cite{itdk} & 190914 & 607610 & 155 & 6.4 & 14 & 26 & 4 & yes & \\ \hline

{\sc gnutella-06} & \multirow{4}{*}{peer-to-peer} & \cite{ripeanu2002mapping,leskovec2007graph} & 8717  & 31525 & 338 & 7.2 & 6 & 10 & 5 & no & 3\\
{\sc gnutella-24} &  & \cite{ripeanu2002mapping,leskovec2007graph} & 26498 & 65359 & 1 & 4.9 & 6 & 11 & 0 & yes & 3\\
{\sc gnutella-30} &  & \cite{ripeanu2002mapping,leskovec2007graph} &  36646 & 88303 & 602 & 4.8 & 7 & 11 & 6 & no & *2.5\\
{\sc gnutella-31} &  & \cite{ripeanu2002mapping,leskovec2007graph} & 62561 & 295756 & 55 & 4.7 & 7 & 11 & 5 & no & *2.5\\ \hline

{\sc web-stanford} & \multirow{3}{*}{web} & \cite{leskovec2009community}  &  255265 & 2234572 & 1 & 15.2 & 82 & 164 & 0 & yes & *7\\
{\sc web-notredam} &  & \cite{albert1999internet} & 325729 & 1497134 & 12 & 6.8 & 23 & 46 & 2 & no & *2\\
{\sc web-berkstan} &  & \cite{leskovec2009community} & 654782 & 7600595 & 1 & 20.1 & 104 & 208 & 0 & yes & *7\\ \hline

{\sc amazon-1} & \multirow{1}{*}{product }  & \cite{yang2012defining} &  334863 & 925872 & 21 & 5.5 & 24 & 47 & 3 & no &\\
{\sc amazon-2} & \multirow{1}{*}{co-purchasing}  & \cite{yang2012defining} &  400727 & 3200440 & 194 & 11.7 & 11 & 20 & 5 & no &\\ \hline

{\sc road-euro} & \multirow{4}{*}{infrastructure} & \cite{euroroad} & 1039 & 1305 & 1 & 2.5 & 31 & 62 & 0 & yes & 7.5 \\
{\sc openflight} &  & \cite{openflight} & 3397 & 19231 & 21 & 11.3 & 7 & 13 & 2 & yes & 2\\
{\sc power-grid} &  & \cite{powerGrid} & 4941 & 6594 & 1 & 2.7 & 23 & 46 & 0 & yes & 10\\
{\sc road-pa} &  & \cite{leskovec2009community} & 1087562 & 3083028 & 2 & 2.83 & 402 & 794 & 1 & yes & *195.5\\

\hline
\end{tabular}
 \par }\medskip
\caption{Statistics of the analyzed networks: $|V|$ is the number of vertices, $|E|$ is the number of edges; $|C(G)|$ is the number of central vertices; $\overline{deg}$ is the  average degree; $rad(G)$ is the graph's radius; $diam(G)$ is the graph's diameter; $diam_G(C(G))$ is the  diameter of the graph's center; "connected?" indicates whether or not the center of the graph is connected; $\delta(G)$ is the graph's hyperbolicity. Hyperbolicity values marked with asterisks are approximate.}
\label{table:dataset}
\end{center} \vspace*{-1.0cm}
\end{table}

We apply our algorithms to six social networks,
four email communication networks,
four  biological networks, six internet graphs, four peer-to-peer networks, three web networks, two product-co-purchasing networks, and four infrastructure networks. Most of the networks listed are part of the Stanford Large Network Dataset Collection ({\sc snap}) and the Koblenz Network Collection ({\sc konect}), and are available at \cite{snap} and \cite{konect}. Characteristics of these networks, such as the number of vertices and edges, the average degree, the radius and the diameter, are given in Table \ref{table:dataset}. The numbers listed in Table \ref{table:dataset} are based on the largest connected component of each network, when the entire network is disconnected. We ignore the directions of the edges and remove all self-loops from each network. Additionally, in Table \ref{table:dataset}, for each network we report the size (as the number of vertices) of its center $C(G)$. We also analyze the diameter and the connectivity of the center of each network. The diameter of the center $diam_G(C(G))$ is defined as the maximum distance between any two central vertices in the graph. In the last column of Table \ref{table:dataset}, we report the Gromov hyperbolicity $\delta$ of majority of networks\footnote{All $\delta$-hyperbolicity values listed in Table \ref{table:dataset} were computed using Gromov's four-point condition definition. As mentioned in \cite{GhHa,Gr}, geodesic triangles of geodesic $\delta$-hyperbolic spaces are 4$\delta$-thin.}. Computing the hyperbolicity of a graph is computationally expensive; therefore, we provide the exact $\delta$ values for the smaller networks (those with $|V| \leq 30$K) in our dataset (in some cases, the algorithm proposed in \cite{cohen2012computing} was used). For some larger networks, the approximated $\delta$-hyperbolicity values listed in Table \ref{table:dataset} are as reported in \cite{KeSN16}\footnote{For {\sc web-stanford} and {\sc web-berkstan}, \cite{KeSN16} gives 1.5 and 2, respectively, as estimates on the hyperbolicities. However, the sampling method they used seems to be not very accurate. According to \cite{hakeem-exp}, the hyperbolicities are at least 7 for both graphs.}.
Most networks that we included in our dataset are hyperbolic. However, for comparison reasons, we included also a few infrastructure networks that are known to lack the hyperbolicity property.

\subsection{Estimation of Eccentricities}

Following Proposition \ref{prop:middlepoint-ecc}, for each graph in our dataset, we found a pair $u,v$ of mutually distant vertices. In column two of Table   \ref{table:Method1-results}, we report on how many $BFS$ sweeps of a graph were needed to locate $u$ and $v$. Interestingly, for almost all graphs (28 out 33) only two sweeps were sufficient. For four other graphs (including {\sc road-pa} network whose hyperbolicity is large)  three sweeps were needed, and only for one graph ({\sc power-grid} network) we needed four sweeps.

\begin{table}[htb]
\begin{center}{\fontsize{7}{9} \selectfont
\begin{tabular}{|l|ccc|cc|cc|cc|}
\hline
Network & \begin{tabular}[x]{@{}c@{}}No. of BFS \\ iterations \end{tabular} & $d_G(u,v)$ & $2rad(G) - d_G(u,v)$ & $ecc_G(c)$ & $ecc_G(c) - rad(G)$ & $d_G(c,C(G))$ &  \begin{tabular}[x]{@{}c@{}}$\min i:$ \\ $B(c,i) \supseteq C(G)$ \end{tabular} & $k_{max}$ & $k_{avg}$ \\

&& \textit{Prop.\ref{prop:middlepoint-ecc}(c)} & \textit{Prop.\ref{prop:middlepoint-ecc}(c)} & \textit{Prop.\ref{prop:middlepoint-ecc}(b)} & \textit{Prop.\ref{prop:middlepoint-ecc}(b)} & \textit{Prop.\ref{prop:middlepoint-ecc}(d)} &&& \\\hline

{\sc dutch-elite} & 2 & 22 & 2 & 13 & 1 & 1 & 3 & 6 & 2.35 \\
{\sc facebook}  & 2 & 8 & 0 & 4 & 0 & 0 & 0 & 2 & 0.686 \\
{\sc eva}  & 2 & 18 & 2 & 10 & 0 & 0 & 2 & 2 & 0.571 \\
{\sc slashdot}   & 2 & 11 & 1 & 7 & 1 & 2 & 2 & 3 & 1.777   \\
{\sc loans}  & 2 & 7 & 3 & 5 & 0 & 0 & 3 & 3 & 2.06 \\
{\sc twitter} & 2 & 8 & 2 & 6 & 1 & 1 & 3 & 4 & 2.569 \\
\hline

{\sc email-virgili}  & 2 & 7 & 3 & 6 & 1 & 1 & 3 & 4 & 2.729 \\
{\sc email-enron}  & 2 & 13 & 1 & 7 & 0 & 0 & 2 & 2 & 0.906 \\
{\sc email-eu} & 2 & 14 & 0 & 7 & 0 & 0 & 0 & 2 & 0.002   \\
{\sc wikitalk-china} & 2 & 7 & 1 & 5 & 1 & 1 & 2 & 3 & 2.076\\
\hline

{\sc ce-metabolic} & 2 & 7 & 1 & 5 & 1 & 1 & 2 & 3 & 1.982 \\
{\sc sc-ppi} & 3 & 19 & 3 & 12 & 1 & 2 & 6 & 3 & 0.981 \\
{\sc yeast-ppi} & 3 & 11 & 1 & 6 & 0 & 0 & 3 & 3 & 1.872 \\
{\sc homo-pi}   & 2 & 10 & 0 & 5 & 0 & 0 & 2 & 2 & 0.747 \\ \hline

{\sc as-graph-1}  & 2 & 8 & 2 & 6 & 1 & 1 & 2 & 3 & 1.791\\
{\sc as-graph-2}   & 3 & 11 & 1 & 6 & 0 & 0 & 3 & 3 & 1.124 \\
{\sc as-graph-3}  & 2 & 9 & 1 & 5 & 0 & 0 & 2 & 2 & 0.828 \\
{\sc routeview}     & 2 & 10 & 0 & 5 & 0 & 0& 2  & 2 & 0.329 \\
{\sc as-caida}      & 2 & 17 & 1 & 9 & 0 & 0 & 1 & 0 & 0   \\
{\sc itdk}   & 2 & 26 & 2 & 15 & 1 & 1 & 3 & 4 & 2.108  \\
\hline

{\sc gnutella-06}  & 2 & 10 & 2 & 6 & 0 & 0 & 4 & 4 & 2.507 \\
{\sc gnutella-24}  & 2 & 10 & 2 & 7 & 1 & 1 & 1 & 5 & 2.697 \\
{\sc gnutella-30}  & 2 & 11 & 3 & 7 & 0 & 0 & 5 & 5 & 3.167 \\
{\sc gnutella-31}  & 2 & 11 & 3 & 8 & 1 & 2 & 5 & 6 & 4.176 \\ \hline

{\sc web-stanford}  & 2 & 164 & 0 & 82 & 0 & 0 & 0 & 28 & 0.006 \\
{\sc web-notredam}  & 2 & 46 & 0 & 23 & 0 & 0 & 2 & 2 & 0.935 \\
{\sc web-berkstan}  & 2 & 208 & 0 & 104 & 0 & 0 & 0 & 22 & 0.002 \\ \hline

{\sc amazon-1}  & 2 & 47 & 1 & 24 & 0 & 0 & 2 & 6 & 0.991 \\
{\sc amazon-2}  & 2 & 20 & 2 & 12 & 1 & 2 & 5 & 6 & 3.735 \\ \hline

{\sc road-euro}   & 2 & 62 & 0 & 31 & 0 & 0 & 0 & 8 & 0.135 \\
{\sc openflight}  & 2 & 13 & 1 & 8 & 1 & 1 & 2 & 3 & 1.879 \\
{\sc power-grid}   & 4 & 46 & 0 & 28 & 5 & 8 & 8 & 13 & 5.735 \\
{\sc road-pa}  & 3 & 794 & 10 & 415 & 13 & 44 & 45 & 98 & 23.339 \\

\hline
\end{tabular}
 \par }
\end{center}\medskip
\caption{Qualities of a pair of mutually distant vertices $u$ and $v$, of a middle vertex $c$ of a $(u,v)$-geodesic, and of a $BFS(c)$-tree $T_1$ rooted at vertex $c$. "No. of BFS iterations`` indicates how many breadth-first-search iterations were needed to obtain a pair of mutually distant vertices $u$ and $v$.
For each vertex $x \in V$, $k(x): = ecc_{T_1}(x) - ecc_G(x)$. Also, $k_{max} := \max_{x\in V} k(x)$ and  $k_{avg} := \frac{1}{n} \sum_{x \in V} k(x)$.}
\label{table:Method1-results}
\vspace*{-1.0cm}
\end{table}

In column four of  Table  \ref{table:Method1-results}, we report for each graph $G$ the difference between  $2rad(G)$ and $d_G(u,v)$. Proposition \ref{prop:middlepoint-ecc}(c) says that the difference must be at most $2\delta+1$, where $\delta$ is the thinness of geodesic triangles in $G$. Actually, for large number (27 out of 33) of graphs in our dataset, the difference is at most two. Five other graphs have the difference equal to 3, and only
{\sc road-pa} network has the difference equal to 10. We have $d_G(u,v)=diam(G)$  for 27 graphs in our dataset, including {\sc road-pa} network whose geodesic triangles thinness is at least 196. For remaining six graphs $d_G(u,v)=diam(G)-1$ holds.

We also analyzed the quality of a middle vertex $c$ of a randomly picked shortest path between mutually distant vertices $u$ and $v$.  Proposition \ref{prop:middlepoint-ecc} states that $ecc_G(c)$ is close to $rad(G)$ and $c$ is not too far from the graph's center $C(G)$. Table \ref{table:Method1-results} lists the properties of the selected middle vertex $c$. In almost all graphs, vertex $c$ belongs to the center $C(G)$ or is at distance one or two from $C(G)$. Even in graphs with $ecc_G(c)- rad(G) > 2$ ({\sc power-grid} and {\sc road-pa}), the value $ecc_G(c)- rad(G)$ is smaller than what is suggested by Proposition \ref{prop:middlepoint-ecc}(b). It is also clear from Table \ref{table:Method1-results} that $c$ is not too far from any vertex in $C(G)$ (look at the radius $i$ of the ball $B(c,i)$ required to include $C(G)$). In all graphs, $i$ is much smaller than $2\delta+1$ (indicated in  Proposition \ref{prop:middlepoint-ecc}(d)).

\begin{table}[htb]
\begin{center}{\fontsize{7}{9} \selectfont
\begin{tabular}{|l|ccc|cc|cc|}
\hline
Network &  $ecc_G(v)$ & $2rad(G) - ecc_G(v)$ & $ecc_G(w)$ & $d_G(w,C(G))$ & \begin{tabular}[x]{@{}c@{}}$\min i:$ \\ $B(w,i) \supseteq C(G)$ \end{tabular} & $k_{max}$ & $k_{avg}$ \\

&& \textit{Prop.\ref{prop:r-vertex-ecc}(c)} & \textit{Prop.\ref{prop:r-vertex-ecc}(a)} &&&& \\\hline

{\sc dutch-elite} &  22 & 2 & 12 & 0 & 4 & 6 & 2.431\\
{\sc facebook} &  8 & 0 & 5 & 3 & 3 & 3 & 0.704 \\
{\sc eva} &  18 & 2 & 11 & 1 & 3 & 2 & 0.572 \\
{\sc slashdot}    &  11 & 1 & 7 & 2 & 2 & 3 & 1.88 \\
{\sc loans}  & 7 & 3 & 5 & 0 & 3 & 3 & 2.031 \\
{\sc twitter}&  8 & 2 & 5 & 0 & 3 & 3 & 1.821 \\
\hline

{\sc email-virgili} &  7 & 3 & 5 & 0 & 4 & 4 & 1.932 \\
{\sc email-enron} &  13 & 1 & 7 & 0 & 2 & 2 & 0.903 \\
{\sc email-eu} &  14 & 0 & 7 & 0 & 0 & 2 & 0.002 \\
{\sc wikitalk-china} &  8 & 0 & 5 & 1 & 2 & 3 & 1.791 \\
\hline

{\sc ce-metabolic} &  7 & 1 & 4 & 0 & 1 & 1 & 0.349 \\
{\sc sc-ppi} &  19 & 3 & 12 & 1 & 6 & 7 & 4.196 \\
{\sc yeast-ppi} &  11 & 1 & 7 & 1 & 3 & 4 & 2.558 \\
{\sc homo-pi}        &  9 & 1 & 5 & 0 & 2 & 2 & 0.612 \\\hline

{\sc as-graph-1} &  9 & 1 & 5 & 0 & 2 & 2 & 0.887 \\
{\sc as-graph-2} &  11 & 1 & 6 & 0 & 3 & 2 & 0.833\\
{\sc as-graph-3} &  9 & 1 & 5 & 0 & 2 & 2 & 0.312 \\
{\sc routeview}    &  10 & 0 & 5 & 0 & 2 & 2 & 0.329 \\
{\sc as-caida}     &  17 & 1 & 9 & 0 & 1 & 0 & 0  \\
{\sc itdk}        &  26 & 2 & 15 & 1 & 3 & 5 & 2.702 \\ \hline

{\sc gnutella-06} &  10 & 2 & 7 & 1 & 5 & 5 & 3.543\\
{\sc gnutella-24} &  11 & 1 & 8 & 3 & 3 & 6 & 4.475\\
{\sc gnutella-30} &  11 & 3 & 8 & 1 & 5 & 6 & 4.034\\
{\sc gnutella-31} &  11 & 3 & 8 & 1 & 5 & 6 & 4.251\\ \hline

{\sc web-stanford}&   164 & 0 & 82 & 0 & 0 & 28 & 0.006 \\
{\sc web-notredam} &   46 & 0 & 23 & 0 & 2 & 2 & 0.935 \\
{\sc web-berkstan} &   208 & 0 & 104 & 0 & 0 & 22 & 0.002 \\ \hline

{\sc amazon-1} &   47 & 1 & 24 & 0 & 3 & 7 & 0.919\\
{\sc amazon-2} &    20 & 2 & 11 & 0 & 5 & 5 & 2.03\\ \hline

{\sc road-euro}  &   62 & 0 & 31 & 0 & 0 & 8 & 0.135\\
{\sc openflight} &   13 & 1 & 7 & 0 & 2 & 2 & 0.641 \\
{\sc power-grid}  &   46 & 0 & 23 & 0 & 0 & 4 & 1.409 \\
{\sc road-pa} &   772 & 32 & 417 & 21 & 22 & 80 & 22.545 \\
\hline
\end{tabular}
\par} \medskip
\caption{Qualities of a vertex $v$ most distant from a random vertex $u$, of a vertex $w$ of a $(u,v)$-geodesic at distance $rad(G)$ from $v$, and of a $BFS(w)$-tree $T_2$ rooted at vertex $w$. For each vertex $x \in V$, $k(x): = ecc_{T_2}(x) - ecc_G(x)$. Also, $k_{max} := \max_{x\in V} k(x)$ and  $k_{avg} := \frac{1}{n} \sum_{x \in V} k(x)$.}
\label{table:Method2-results}
\end{center}
\vspace*{-1.0cm}
\end{table}

Following Theorem  \ref{th:ecc-appr-tree}, for each graph $G=(V,E)$ in our dataset, we constructed an arbitrary $BFS(c)$-tree $T_1=(V,E')$, rooted at vertex $c$, and analyzed how well $T_1$ preserves or approximates the eccentricities of vertices in $G$.
By Theorem  \ref{th:ecc-appr-tree},  $ecc_G(v)\leq ecc_{T_1}(v)\leq ecc_G(v)+ 3\delta+1$ holds for every $v\in V$.
In our experiments, for each graph $G$ and the constructed  for it $BFS(c)$-tree $T_1$, we computed $k_{max} := \max_{v\in V} \{ecc_{T_1}(v) - ecc_G(v)\}$ ({\em maximum distortion}) and $k_{avg} := \frac{1}{n}\sum_{v \in V} ecc_{T_1}(v) - ecc_G(v)$ ({\em average distortion}).
For most graphs (see Table \ref{table:Method1-results}), the value of $k_{max}$ is small: $k_{max} = 0$ for one graph, $k_{max} = 2$ for eight graphs, $k_{max} = 3$ for nine graphs, $k_{max} = 4$ for four graphs, $k_{max} = 5$ for two graphs, and $k_{max} > 5$ for nine graphs. Also, the average distortion $k_{avg}$ is much smaller than $k_{max}$ for all graphs. In fact, $k_{avg} < 3$ in all but five graphs ({\sc gnutella-30, gnutella-31, amazon-2, power-grid,} and {\sc road-pa}). In graphs with high $k_{max}$, close inspection reveals that only small percent of vertices achieve this maximum. For example, in graph  {\sc web-stanford}, $k_{max} = 28$ was only achieved by 17 vertices.  The distributions of the values of $k(v) := ecc_{T_1}(v) - ecc_G(v)$ of all graphs are listed in Table \ref{table:k-distributions-method1} (see Appendix).

Similar experiments were performed following Proposition \ref{prop:r-vertex-ecc}. 
For each graph $G$ in our dataset,  we picked a random vertex $u\in V$ and a random vertex $v\in F(u)$. Then, we identified in a randomly picked $(u,v)$-geodesic a vertex $w$ at distance $rad(G)$ from $v$. We did not consider a vertex $c$ defined in Proposition \ref{prop:r-vertex-ecc}(d) since,
for majority of graphs in our dataset,  $c$ will be a middle vertex of a geodesic between two mutually distant vertices, and working with $c$ we will duplicate previous experiments. Recall that for majority of our graphs (as found in our experiments) two BFS sweeps already identify a pair of mutually distant vertices.
We know  from Proposition \ref{prop:r-vertex-ecc} that $ecc_G(v)\ge diam(G)-2\delta \ge 2rad(G)-4\delta-1$ and $ecc_G(w)\leq rad(G)+\delta$.  Our experimental results are better than these theoretical bounds. In Table \ref{table:Method2-results}, we list eccentricities of $v$ and $w$ for each graph. In almost all graphs, the eccentricity of $v$ is equal to the diameter $diam(G)$. Only four graphs have $ecc_G(v)=diam(G)-1$ and one graph ({\sc road-pa}) has $ecc_G(v)>diam(G)-1$.  Vertex $w$ is central for 21 graphs, has eccentricity equal to $rad(G)+1$ for 10 graphs, has eccentricity equal to $rad(G)+2$ for one graph, and only for one remaining graph ({\sc road-pa} network, which has large hyperbolicity) its eccentricity is equal to $rad(G)+15$.
It turns out also (see columns five and six of Table \ref{table:Method1-results}) that vertex $w$ either belongs to the center $C(G)$ or is very close to the center. The only exception is again {\sc road-pa} network  where $2rad(G) - ecc_G(w) = 32$ and $d(w,C(G)) = 21$.

For every graph $G=(V,E)$ in our dataset, we constructed also an arbitrary $BFS(w)$-tree $T_2=(V,E')$, rooted at vertex $w$, and analyzed how well $T_2$ preserves or approximates the eccentricities of vertices in $G$.
The value of $k_{max}$ is at most five for 23 graphs. The average distortion $k_{avg}$ is much smaller than $k_{max}$ in all graphs. The distributions of the values of $k(x)$ for all graphs are presented in Table \ref{table:k-distributions-method2} (see Appendix).

\begin{table}[htb]
\begin{center}{\fontsize{7}{9} \selectfont
\begin{tabular}{|l|c|ccc|ccc|ccc|}
\hline
 &   &   &   &   &   &   &  &   &   &   \\
Network & $diam(G)$ & $diam(T_1)$ & $k_{max}^{T_1}$ & $k_{avg}^{T_1}$ & $diam(T_2)$ & $k_{max}^{T_2}$ & $k_{avg}^{T_2}$ & $diam(T_3)$ & $k_{max}^{T_3}$ & $k_{avg}^{T_3}$ \\

\hline

{\sc dutch-elite} & 22 & 24 & 6 & 2.35 & 24 & 6 & 2.431 & 24 & 6 & 2.083\\
{\sc facebook} & 8 & 8 & 2 & 0.686 & 9 & 3 & 0.704 & 8 & 2 & 0.686 \\
{\sc eva} & 18 & 19 & 2 & 0.571 & 19 & 2 & 0.572 & 19 & 2 & 0.571 \\
{\sc slashdot}    & 12 & 14 & 3 & 1.777 & 14 & 3 & 1.88 & 12 & 2 & 0.701 \\
{\sc loans}  & 8 & 10 & 3 & 2.06 & 10 & 3 & 2.031 & 10 & 3 & 2.081 \\
{\sc twitter} & 8 & 11 & 4 & 2.569 & 10 & 3 & 1.821 & 10 & 4 & 1.856 \\
\hline
{\sc email-virgili} & 8 & 11 & 4 & 2.729 & 10 & 4 & 1.932 & 10 & 4 & 1.906 \\
{\sc email-enron} & 13 & 13 & 2 & 0.906 & 14 & 2 & 0.903 & 14 & 2 & 1.735 \\
{\sc email-eu} & 14 & 14 & 2 & 0.002 & 14 & 2 & 0.002 & 14 & 2 & 0.002 \\
{\sc wikitalk-china} & 8 & 9 & 3 & 2.076 & 9 & 3 & 1.791 & 8 & 2 & 0.777\\
\hline

{\sc ce-metabolic} & 7 & 9 & 3 & 1.982 & 8 & 1 & 0.349 & 8 & 2 & 1.185\\
{\sc sc-ppi} & 19 & 20 & 3 & 0.981 & 23 & 7 & 4.196 & 22 & 6 & 3.163\\
{\sc yeast-ppi} & 11 & 12 & 3 & 1.872 & 13 & 4 & 2.558 & 12 & 3 & 1.872 \\
{\sc homo-pi}     & 10 & 10 & 2 & 0.747 & 10 & 2 & 0.612 & 10 & 2 & 0.747\\
\hline

{\sc as-graph-1} & 9 & 11 & 3 & 1.791 & 10 & 2 & 0.887 & 10 & 2 & 0.886 \\
{\sc as-graph-2} & 11 & 11 & 3 & 1.124 & 11 & 2 & 0.833 & 12 & 3 & 1.272\\
{\sc as-graph-3} & 9 & 10 & 2 & 0.828 & 10 & 2 & 0.312 & 10 & 2 & 0.312\\
{\sc routeview}    & 10 & 10 & 2 & 0.329 & 10 & 2 & 0.329 & 10 & 2 & 0.329\\
{\sc as-caida}     & 17 & 17 & 0 & 0 & 17 & 0 & 0 & 17 & 0 & 0 \\
{\sc itdk}        & 26 & 29 & 4 & 2.108 & 29 & 5 & 2.702 & 28 & 3 & 1.385 \\
\hline

{\sc gnutella-06} & 10 & 12 & 4 & 2.507 &13 & 5 & 3.543 & 12 & 4 & 2.507\\
{\sc gnutella-24} & 11 & 14 & 5 & 2.697 & 16 & 6 & 4.475 & 12 & 3 & 0.863\\
{\sc gnutella-30} & 11 & 14 & 5 & 3.167 & 16 & 6 & 4.034 & 14 & 5 & 3.295\\
{\sc gnutella-31} & 11 & 16 & 6 & 4.176 & 16 & 6 & 4.251 & 14 & 5 & 2.669 \\
\hline

{\sc web-stanford}& 164 & 164 & 28 & 0.006 & 164 & 28 & 0.006 & 164 & 28 & 0.006 \\
{\sc web-notredam}& 46 & 46 & 2 & 0.935 & 46 & 2 & 0.935 & 46 & 2 & 0.017\\
{\sc web-berkstan}& 208 & 208 & 22 & 0.002 & 208 & 22 & 0.002 & 208 & 22 & 0.002 \\
\hline

{\sc amazon-1}    & 47 & 47 & 6 & 0.991 & 48 & 7 & 0.919& 47 & 7 & 1.205 \\
{\sc amazon-2}    & 20 & 23 & 6 & 3.735 & 22 & 5 & 2.03 & 22 & 4 & 1.274 \\
\hline

{\sc road-euro}  & 62  & 62 & 8 & 0.135 & 62 & 8 & 0.135 & 62 & 8 & 0.135\\
{\sc openflight} & 13  & 15 & 3 & 1.879 & 14 & 2 & 0.641 & 14 & 2 & 0.704\\
{\sc power-grid} & 46  & 51 & 13 & 5.735 & 46 & 4 & 1.409 & 46 & 4 & 1.409\\
{\sc road-pa}    & 794 & 814 & 98 & 23.339 & 830 & 80 & 22.545  & 803 & 46 & 10.64\\
\hline
\end{tabular}
\par }\medskip
\caption{Comparison of three BFS-trees $T_1$, $T_2$ and $T_3$.  $T_3$ is a $BFS(c^*)$-tree rooted at a randomly picked central vertex $c^*\in C(G)$.}
\label{table:comparison}
\end{center}
\vspace*{-1.0cm}
\end{table}

In Table \ref{table:comparison}, we compare these two eccentricity approximating spanning trees $T_1$ and $T_2$ with each other and with a third $BFS(c^*)$-tree $T_3$ which we have constructed starting from a randomly chosen central vertex $c^*\in C(G)$.

For each graph in the dataset, three values of $k_{max}$ ($k_{max}^{T_1}$, $k_{max}^{T_2}$ and $k_{max}^{T_3}$) and three values of $k_{avg}$ ($k_{avg}^{T_1}$, $k_{avg}^{T_2}$ and $k_{avg}^{T_3}$) are listed. We observe that the smallest $k_{max}$ (out of three) is achieved by tree $T_3$ in 28 graphs, by tree $T_2$ in 20 graphs and by tree $T_1$ in 20 graphs (in 14 graphs, the smallest $k_{max}$ is achieved by all three trees). The difference between the largest and the smallest $k_{max}$ of a graph is at most one for 26 graphs in the dataset. The largest difference is observed for {\sc road-pa} network:  the largest $k_{max}$ (98) is given by tree $T_1$, the smallest $k_{max}$ (46) is given by tree $T_3$. Two other graphs have the difference larger than three: for {\sc sc-ppi} network,  the largest $k_{max}$ (7) is given by tree $T_2$, the smallest $k_{max}$ (3) is given by tree $T_1$; for {\sc power-grid} network,  the largest $k_{max}$ (13) is given by tree $T_1$, the smallest $k_{max}$ (4) is shared by remaining trees $T_2$, $T_3$. Overall, we conclude that $k_{max}$ values for trees $T_1$ and $T_2$ are comparable and generally can be slightly worse than those for tree $T_3$. Similar observations hold also for the average distortion $k_{avg}$. Note, however, that for construction of trees $T_2$ and $T_3$ one needs to know $rad(G)$ or a central vertex of $G$, which are unlikely to be computable in subquadratic time (see Statement \ref{stat:lower-rad}).

\commentout{ 


\begin{table}[htb]
\begin{center}{\fontsize{7}{9} \selectfont
\begin{tabular}{|c|cc|cc|}
\hline
Network & $diam(G)$ & $diam(T_3)$ & $k_{max}$ & $k_{avg}$ \\ \hline

{\sc Dutch-Elite} & 22 & 24 & 6 & 2.083\\
{\sc Facebook} & 8 & 8 & 2 & 0.686 \\
{\sc EVA} & 18 & 19 & 2 & 0.571 \\
{\sc Slashdot}    & 12 & 12 & 2 & 0.701 \\
{\sc loans}  &  8 & 10 & 3 & 2.081 \\
{\sc twitter}& 8 & 10 & 4 & 1.856 \\
\hline

{\sc EMAIL-virgili} & 8 & 10 & 4 & 1.906 \\
{\sc Email-enron} & 13 & 14 & 2 & 1.735 \\
{\sc email-Eu} &  14 & 14 & 2 & 0.002 \\
{\sc wikitalk-china} & 8 & 8 & 2 & 0.777 \\
\hline

{\sc CE-metabolic} & 7 & 8 & 2 & 1.185 \\
{\sc SC-PPI} & 19 & 22 & 6 & 3.163 \\
{\sc Yeast-PPI} & 11 & 12 & 3 & 1.872 \\
{\sc Homo-PI}        & 10 & 10 & 2 & 0.747 \\\hline

{\sc AS-Graph-1} & 9 & 10 & 2 & 0.886 \\
{\sc AS-Graph-2} & 11 & 12 & 3 & 1.272 \\
{\sc AS-Graph-3} & 9 & 10 & 2 & 0.312 \\
{\sc Routeview}    & 10 & 10 & 2 & 0.329 \\
{\sc AS-caida}     & 17 & 17 & 0 & 0  \\
{\sc itdk}        & 26 & 28 & 3 & 1.385 \\ \hline

{\sc Gnutella-06} & 10 & 12 & 4 & 2.507 \\
{\sc Gnutella-24} & 11 & 12 & 3 & 0.863\\
{\sc Gnutella-30} & 11 & 14 & 5 & 3.295\\
{\sc Gnutella-31} & 11 & 14 & 5 & 2.669\\ \hline

{\sc web-stanford}& 164 & 164 & 28 & 0.006 \\
{\sc web-Notredam} & 46 & 46 & 2 & 0.017  \\
{\sc web-BerkStan} & 208 & 208 & 22 & 0.002 \\ \hline

{\sc Amazon-1} & 47 & 47 & 7 & 1.205\\
{\sc Amazon-2} & 20 & 22 & 4 & 1.274 \\ \hline

{\sc road-euro}  & 62 & 62 & 8 & 0.135\\
{\sc openFlight} & 13 & 14 & 2 & 0.704 \\
{\sc POWER-Grid}  &  46 & 46 & 4 & 1.409 \\
{\sc road-pa} & 794 & 803 & 46 & 10.64  \\
\hline
\end{tabular}
\par}\medskip
\caption{Eccentricity approximating spanning tree $T_3$ constructed by heuristic (). $G$: original graph (network); $T_3$: spanning tree. For each vertex $x \in V$, $k(x) = ecc_{T_3}(x) - ecc_G(x)$; $k_{max} = \max_{x\in V} k(x)$; $k_{avg} = \frac{1}{n} \sum_{x \in V} k(x)$.}
\label{table:Method3-results}
\end{center}
\end{table}

\begin{table}[htb]
\begin{center}{\fontsize{7}{9} \selectfont
\begin{tabular}{|c|cc|ccccccccc|}
\hline
Network & $diam(G)$ & $diam(T_3)$ & $k_{max}$ & $k_{avg}$ & \begin{tabular}[x]{@{}c@{}}\% of vertices\\with $k(x) = 0$\end{tabular} & \begin{tabular}[x]{@{}c@{}}\% of vertices\\with $k(x) = 1$\end{tabular} & \begin{tabular}[x]{@{}c@{}}\% of vertices\\with $k(x) = 2$\end{tabular} & \begin{tabular}[x]{@{}c@{}}\% of vertices\\with $k(x) = 3$\end{tabular} & \begin{tabular}[x]{@{}c@{}}\% of vertices\\with $k(x) = 4$\end{tabular} & \begin{tabular}[x]{@{}c@{}}\% of vertices\\with $k(x) = 5$\end{tabular} & \begin{tabular}[x]{@{}c@{}}\% of vertices\\with $k(x) \geq 6$\end{tabular}\\ \hline

{\sc Dutch-Elite} & 22 & 24 & 6 & 2.083 & 17.5 & & 61.3 & & 20.9 & & 0.4  \\
{\sc Facebook} & 8 & 8 & 2 & 0.686 & 51.9 & 27.6 & 20.5 &&&& \\
{\sc EVA} & 18 & 19 & 2 & 0.571 & 47.6 & 47.7 & 4.7 &&&& \\
{\sc Slashdot}    &  12 & 12 & 2 & 0.701 & 32.3 & 65.3 & 2.4 &&&&  \\
{\sc loans}  & 8 & 10 & 3 & 2.081 & 0.06 & 12.1 & 67.4 & 20.3 &&& \\
{\sc twitter} &  8 & 10 & 4 & 1.856 & 0.6 & 22.8 & $\approx 67$ & 9.6 & $\approx 0$ && \\
\hline

{\sc EMAIL-virgili} & 8 & 10  & 4 & 1.906 & 1.5 & 23.9 & 57.2 & 17.2 & 0.2 & &   \\
{\sc Email-enron} & 13 & 14 & 2 & 1.735 & $\approx 1$ & 24.6 &  74.4 &&&&   \\
{\sc email-Eu} & 14 & 14 & 2 & 0.002 & 99.9 & 0.03 & 0.1 &&&&  \\
{\sc wikitalk-china} & 8 & 8 & 2 & 0.777 & 22.3 & 77.7 & 2.5 &&&& \\
\hline

{\sc CE-metabolic} & 7 & 8 & 2 & 1.185 & 1.1 & 79.2 & 19.6 &&&& \\
{\sc SC-PPI} & 19 & 22 & 6 & 3.163 & 2.1 & 3.6 & 17.8 & 36.6 & 32.8 & 6.9 & 0.3 \\
{\sc Yeast-PPI} & 11 & 12 & 3 & 1.872 & 2.1 & 25.4 & 55.8 & 16.8 &&& \\
{\sc Homo-PI}     &  10 & 10 & 2 & 0.747 & 34.2 & 56.9 & 8.9 &&&& \\ \hline

{\sc AS-Graph-1} & 9 & 10 & 2 & 0.887 & 19.6 & 72.2 & 8.2 &&&& \\
{\sc AS-Graph-2} & 11 & 12 & 3 & 1.272 & 4.7 & 63.4 & 31.9 & 0.04 &&& \\
{\sc AS-Graph-3} & 9 & 10 & 2 & 0.312 & 70.4 & $\approx 28$ & 1.6 &&&& \\
{\sc Routeview}    & 10 & 10 & 2 & 0.329 & 69.7 & 27.6 & 2.7 &&&& \\
{\sc AS-caida}     & 17 & 17 & 0 & 0 &&&&&&&   \\
{\sc itdk}  &  26 & 28 & 3 & 1.385 & 8.3 & 46.3 & 44.1 & 1.4 & & & \\ \hline

{\sc Gnutella-06} & 10 & 12 & 4 & 2.507 & 0.3 & 5.7 & 41.1 &  48.8 & 4.1 && \\
{\sc Gnutella-24} & 11 & 12 & 3 & 0.863 & 29.9 & 54.2 & 15.7 & 0.2 &&& \\
{\sc Gnutella-30} & 11 & 14 & 5 & 3.295 & 0.2 & 1.4 & 11.2 & 45.3 & 39.7 & 2.2 &  \\
{\sc Gnutella-31} & 11 & 14 & 5 & 2.669 & 0.5 & 4.5 & 31.5 & 54.6 & 8.9 & $\approx 0$ &\\ \hline

{\sc web-stanford}& 164 & 164 & 28 & 0.006 & 99.9 & $\approx 0$ & $\approx 0$ & 0.04 & $\approx 0$ & 0.02 & $\approx 0$ \\
{\sc web-Notredam}& 46 & 46 & 2 & 0.017 & 98.5 & 1.3 & 0.2 &&&&\\
{\sc web-BerkStan}& 208 & 208 & 22 & 0.002 & 99.9 & $\approx 0$ & 0 & $\approx 0$ & 0 & $\approx 0$ & $\approx 0$ \\ \hline

{\sc Amazon-1}    & 47 & 47 & 7 & 1.205 & 28.1 & 34.5 & 28.2 & 7.9 & 0.8 & 0.3 & 2.5 \\
{\sc Amazon-2}    &  20 & 22 & 4 & 1.274 & 23.5 & 33.9 & 34.8 & 7.4 & 0.5 && \\ \hline

{\sc road-euro}  & 62 & 62 & 8 & 0.135 & 97.4 & 0.3 & 0.1 & 0.4 & & 0.8 & $\approx 1$  \\
{\sc openFlight} & 13 & 14 & 2 & 0.704 & 30.6 & 68.4 & $\approx 1$ &&&& \\
{\sc POWER-Grid} &  46 & 46 & 4 & 1.409 & 46.3 & 13.1 & 12.6 & 9.1 & 18.8 &&  \\
{\sc road-pa} & 794 & 803 & 46 & 10.64 & 6.9 & 0.4 & 1.7 & 6.2 & 6.9 & 0.8 & 77.1 \\
\hline
\end{tabular}
\par}\medskip
\caption{Distribution of distortions in the eccentricity approximating trees constructed by heuristic (). For any vertex $x \in V$, $k = ecc_{T_3}(x) - ecc_G(x)$. $k_{max} = \max_{x\in V} \{ecc_G(x) - ecc_{T_3}(x)\}$; $k_{avg} = \sum_{x \in V} ecc_{T_3}(x) - ecc_G(x)/n$.}
\label{table:k-distributions-method3}
\end{center}
\vspace{0.5ex}
\end{table}

}

\begin{table}[htb]
\begin{center}{\fontsize{7}{9} \selectfont
\begin{tabular}{|l|cc|ccc|c|}
\hline
Network & $diam(G)$ & $rad(G)$ & ${\delta}$ & $\Delta_{max}({\delta})$ & $\Delta_{avg}({\delta})$ &    $ecc(s)$

\\ \hline

{\sc dutch-elite} &  22 &  12 & 8 & 8 & 0.177 &   16  \\
{\sc facebook} & 8 &  4 & 2 & 2 & 0.169 &   6  \\
{\sc eva} &  18 &  10 & 6 & 6 & 0.044 &   12 \\
{\sc slashdot*}    &  12 & 6 & 4 & 2 & 0.028 &   8 \\
{\sc loans*}  &  8 & 5 & 3 & 3 & 0.213 &   6\\
{\sc twitter*}  & 8 & 5 & 3 & 3 & 0.156 &    6\\
\hline

{\sc email-virgili} & 8 & 5  & 3 & 4 & 0.39 &    6 \\
{\sc email-enron} &  13 & 7  & 4 & 4 & 0.06 &   9  \\
{\sc email-eu*} & 14 & 7 & 3 & 2 &  0.005 &    10  \\

\hline

{\sc ce-metabolic} & 7 &  4 & 2 & 3 & 0.125 &   4 \\
{\sc sc-ppi} &  19 & 11  & 6 & 6 & 0.19 &   13 \\
{\sc yeast-ppi} & 11 & 6  & 4 & 4 & 0.239 &   8 \\
{\sc homo-pi}   & 10 & 5  & 3 & 3 & 0.02 &   7   \\ \hline

{\sc as-graph-1} & 9 & 5 & 3 & 4 & 0.061 &   8 \\
{\sc as-graph-2} & 11 & 6  & 4 & 4 & 0.034 &   8  \\
{\sc as-graph-3} &  9 & 5 & 4 & 3 & 0.035 &   9  \\
{\sc routeview}    &  10 & 5  & 4 & 4 & 0.038 &   6  \\
{\sc as-caida}     &  17 & 9 & 3 & 4 & 0.022 &    14  \\
{\sc itdk*}  & 26 & 14 & 5 & 4 & 0.15 &   19 \\
\hline

{\sc gnutella-06} & 9 & 6  & 5 & 4 & 0.331 &   8 \\
{\sc gnutella-24} & 11 & 6 & 6 & 6 &  0.128 &   9  \\
{\sc gnutella-30*} &  11 & 7 & 6 & 5 & 0.439 &   8 \\
{\sc gnutella-31*} & 11 & 7 & 6 & 5 & 0.386 &    9 \\ \hline

\hline

{\sc road-euro}  &  62 & 31  & 21 & 11 & 0.927 &   39\\
{\sc openflight} &  13 & 7  &  3 & 4 & 0.029 &   10 \\
{\sc power-grid} &  46 & 23 &  17 & 17 & 0.518 &   38 \\

\hline
\end{tabular}
\par}\medskip
\caption{Distance approximations: for every $x,y\in V$, $\Delta_{xy}(\delta) =  \widehat{d}_{\delta}(x,y) - d_G(x,y)$; $\Delta_{max}(\delta) = \max_{x,y\in V} \Delta_{xy}(\delta)$; $\Delta_{avg}(\delta) = \frac{1}{n^2}\sum_{x,y\in V} \Delta_{xy}(\delta) $; ${\delta}$ is defined as the smallest $\delta$ ($0 \leq  \delta \leq diam(G)$) such that
$\Delta_{max}(\delta)\le \delta +1$. Due to large sizes of some networks, the values of $\Delta_{max}({\delta})$ and $\Delta_{avg}({\delta})$ for networks marked with * were computed only for some sampled vertices (we sampled vertices that are most distant from the root). The number of sampled vertices ranged from 10 to 100 in each network.}
\label{table:distance_approximation}
\end{center}
\end{table}

\subsection{Estimation of Distances} Following Theorem \ref{th:dist-power}, we experimented also on how well our approach approximates the distances in  graphs from our dataset.
 To analyze the quality of approximation provided by our method for a given graph $G=(V,E)$, for every $\delta:= 0,1,2,\dots$, we computed an estimate $\widehat{d_{\delta}}(x,y)$ on $d_G(x,y)$ and the error $\Delta_{xy}(\delta) =  \widehat{d_{\delta}}(x,y) - d_G(x,y)$ for all $x,y\in V$. In Table \ref{table:distance_approximation}, we report
$\Delta_{max}(\delta) = \max_{x,y\in V} \Delta_{xy}(\delta)$ and $\Delta_{avg}(\delta) = \frac{1}{n^2}\sum_{x,y\in V} \Delta_{xy}(\delta)$ for the smallest $\delta$ such that $\Delta_{max}(\delta)\le \delta +1$. We omitted some very large graphs in this experiment. For some other large graphs, we did only sampling; we calculated
$\Delta_{max}({\delta})$ and $\Delta_{avg}({\delta})$ based only on a set of sampled vertices. We sampled vertices that are most distant from the root. The number of sampled vertices ranged from 10 to 100 in each network. For all networks investigated, the average error $\Delta_{avg}(\delta)$ was very small, less that 1 even for infrastructure networks. That is, the  maximum error $\Delta_{max}(\delta)$ was realized on a very small number of vertex pairs. The  maximum error $\Delta_{max}(\delta)$ was 2 for three networks, was 3 for five networks,  was 4 for ten networks (including  infrastructure network {\sc openflight}), and was at most 6 for all except one social network {\sc dutch-elite} and two infrastructure networks: {\sc road-euro} and {\sc power-grid}. The largest $\Delta_{max}(\delta)$ value had expectedly {\sc power-grid} network whose hyperbolicity is 10.

\subsection*{Acknowledgements}
 The research of V.C., M.H., and Y.V.  was
 supported by ANR project DISTANCIA (ANR-17-CE40-0015).

\newpage
{\large Appendix}

\begin{table}[htb]
\begin{center}{\fontsize{7}{9} \selectfont
\begin{tabular}{|l|cc|ccccccc|}
\hline
Network & $k_{max}$ & $k_{avg}$ & \begin{tabular}[x]{@{}c@{}}\% of vertices\\with $k(x) = 0$\end{tabular} & \begin{tabular}[x]{@{}c@{}}\% of vertices\\with $k(x) = 1$\end{tabular} & \begin{tabular}[x]{@{}c@{}}\% of vertices\\with $k(x) = 2$\end{tabular} & \begin{tabular}[x]{@{}c@{}}\% of vertices\\with $k(x) = 3$\end{tabular} & \begin{tabular}[x]{@{}c@{}}\% of vertices\\with $k(x) = 4$\end{tabular} & \begin{tabular}[x]{@{}c@{}}\% of vertices\\with $k(x) = 5$\end{tabular} & \begin{tabular}[x]{@{}c@{}}\% of vertices\\with $k(x) \geq 6$\end{tabular}\\ \hline

{\sc dutch-elite} & 6 & 2.35 & 14.9 & 0 & 54.3 & 0 & 29.1 & 0 & 1.7 \\
{\sc facebook} & 2 & 0.686 & 51.9 & 27.6 & 20.5 &&&&  \\
{\sc eva}  & 2 & 0.571 & 47.6 & 47.7 & 4.7 &&&&  \\
{\sc slashdot}    & 3 & 1.777 & 2.3 & 24.1 & 67.1 & 6.5  &&&  \\
{\sc loans}  & 3 & 2.06 & 0.1 & 13.9 & 66.3 & 19.7 && &\\
{\sc twitter} & 4 & 2.569 & 0.1 & $\approx 1$ & 44.4 & 51.2 & 3.4 && \\
\hline

{\sc email-virgili} & 4 & 2.729 & 0.1 & 2.3 & 32 & 55.7 & 9.9 &&  \\
{\sc email-enron} & 2 & 0.906 & 23.4 & 62.6 & 14 &&&&\\
{\sc email-eu} & 2 & 0.002 & 99.8 & 0.1 & 0.1 &&&&    \\
{\sc wikitalk-china} & 3 & 2.076 & $\approx 0$ & 0.01 & 92.4 & 7.6 &&& \\
\hline

{\sc ce-metabolic} & 3 & 1.982 & 0.2 & 7.5 & 86.1 & 6.2 &&& \\
{\sc sc-ppi} & 3 & 0.981 & 32.4 & 41.6 & 21.5 & 4.5 &&& \\
{\sc yeast-ppi} & 3 & 1.872 & 2 & 25.4 & 55.8 & 16.8 &&& \\
{\sc homo-pi}     & 2 & 0.747 & 34.2 & 56.9 & 8.9 &&&&  \\ \hline

{\sc as-graph-1} & 3 & 1.791 & 0.5 & 24.9 & 69.7 & 4.9 &&& \\
{\sc as-graph-2} & 3 & 1.124 & 9.6 & 68.5 & 21.7 & 0.2 &&&  \\
{\sc as-graph-3} & 2 & 0.828 & 27.8 & 61.6 & 10.6 &&&&  \\
{\sc routeview}   & 2 & 0.329 & 69.7 & 27.6 & 2.7 &&&&  \\
{\sc as-caida}    & 0 & 0 & 100 &&&&&&    \\
{\sc itdk}        & 4 & 2.108 & 0.3 & 12 & 64.5 & 22.8 & 0.4 &&   \\ \hline

{\sc gnutella-06} & 4 & 2.507 & 0.3 & 5.7 & 41.1 & 48.8 & 4.1 &&  \\
{\sc gnutella-24} & 5 & 2.697 & 0.2 & 1.5 & 37 & 50.7 & 10.5 & 0.1 &  \\
{\sc gnutella-30} & 5 & 3.167 & 0.1 & 1.8 & 13 & 52.4 & 31.8 & 0.9 & \\
{\sc gnutella-31} & 6 & 4.176 & 0.01 & 0.2 & 1.4 & 13.3 & 51.1 & 33.4 & 0.5 \\ \hline

{\sc web-stanford}&  28 & 0.006 & 99.9 & $\approx 0$ & $\approx 0$ & $\approx 0$ & $\approx 0$ & $\approx 0$ & $\approx 0$ \\
{\sc web-notredam}&  2 & 0.935 & 7.1 & 92.4 & 0.5 &&&& \\
{\sc web-berkstan}&  22 & 0.002 & 99.9 & $\approx 0$ & 0 & $\approx 0$ & 0 & $\approx 0$ & $\approx 0$   \\ \hline

{\sc amazon-1}    &  6 & 0.991 & 28.1 & 48 & 21.5 & 1.5 & 0.5 & 0.3 & 0.1 \\
{\sc amazon-2}    &  6 & 3.735 & 0.1 & 0.3 & 3.6 & 33.9 & 46.5 & 15.3 & 0.3 \\ \hline

{\sc road-euro}  &  8 & 0.135 & 97.4 & 0.3 & 0.1 & 0.4 & 0 & 0.8 & $\approx 1$ \\
{\sc openflight} &   3 & 1.879 & 0.2 & 23.9 & 63.7 & 12.2 &&& \\
{\sc power-grid} &   13 & 5.735 & 14.3 & 13.1 & 1.6 & 1.6 & 3.9 & 8.7 & 39.8  \\
{\sc road-pa} &   98 & 23.339 & 0.02 & 1.5 & 0.1 & 2.9 & 0.2 & 0.2 & 95 \\
\hline
\end{tabular}
\par } \medskip
\caption{Distribution  of values $k(x) = ecc_{T_1}(x) - ecc_G(x)$, $x \in V$.  $k_{max} := \max_{x\in V} k(x)$.  $k_{avg} := \frac{1}{n} \sum_{x \in V} k(x)$.}
\label{table:k-distributions-method1}
\end{center}
\vspace{0.5ex}
\commentout{
\raggedright Full distribution of distortions for networks with $k_{max > 6}$ ($|k_i|$: number of vertices with distortion $i$): \\
{\sc web-stanford} (no. of vertices with un-preserved eccentricities is 209): $|k_3| = 106$, $|k_5| = 59$, and $|k_{28}| = 17$.  \\

{\sc web-BerkStan} (no. of vertices with un-preserved eccentricities is 210): $|k_3| = 107$, $|k_5| = 60$, and $|k_{22}| = 23$.
{\sc power-grid} (no. of vertices with un-preserved eccentricities is 4233): $|k_1| = 649$, $|k_5| = 431$, $|k_{7}| = 516$, $|k_8| = 683$, $|k_9| = 703$, and $|k_{13}| = 46$.  \\

}
\vspace*{-1.0cm}
\end{table}

\begin{table}[htb]
\begin{center}{\fontsize{7}{9} \selectfont
\begin{tabular}{|l|cc|ccccccc|}
\hline
Network &  $k_{max}$ & $k_{avg}$ & \begin{tabular}[x]{@{}c@{}}\% of vertices\\with $k(x) = 0$\end{tabular} & \begin{tabular}[x]{@{}c@{}}\% of vertices\\with $k(x) = 1$\end{tabular} & \begin{tabular}[x]{@{}c@{}}\% of vertices\\with $k(x) = 2$\end{tabular} & \begin{tabular}[x]{@{}c@{}}\% of vertices\\with $k(x) = 3$\end{tabular} & \begin{tabular}[x]{@{}c@{}}\% of vertices\\with $k(x) = 4$\end{tabular} & \begin{tabular}[x]{@{}c@{}}\% of vertices\\with $k(x) = 5$\end{tabular} & \begin{tabular}[x]{@{}c@{}}\% of vertices\\with $k(x) \geq 6$\end{tabular}\\ \hline

{\sc dutch-elite} &   6 & 2.431 & 16.1 & 0 & 47.1 & 0 & 35.9 & 0 & 0.8  \\
{\sc facebook} &   3 & 0.704 & 43.6 & 42.5 & 13.8 & 0.1 &&&  \\
{\sc eva} &   2 & 0.572 & 47.6 & 47.6 & 4.8 &&&& \\
{\sc slashdot}    &   3 & 1.88 & 0.1 & 17.7 & 76.2 & $\approx 6$ &&&   \\
{\sc loans}  &   3 & 2.031 & 0.1 & 14 & 68.7 & 17.2 &&& \\
{\sc twitter} &   3 & 1.821 & 3.1 & $\approx 20$ & 68.6 & 8.3 &&& \\
\hline

{\sc email-virgili} &   4 & 1.932 & 4.3 & 22.8 & 48.4 & 24.4 & 0.1 &&  \\
{\sc email-enron} &   2 & 0.903 & 22.4 & 64.8 & 12.7 &&&& \\
{\sc email-eu} &   2 & 0.002 & 99.9 & 0.03 & 0.1 &&&&   \\
{\sc wikitalk-china} &   3 & 1.791 & $\approx 0$ & 21 & 79 & 0.008 &&& \\
\hline

{\sc ce-metabolic} &   1 & 0.349 & 65.1 & 34.9 &&&&& \\
{\sc sc-ppi} &   7 & 4.196 & 1.3 & 4.1 & 6.2 & 13.4 & 27.2 & 35.9 & 11.8 \\
{\sc yeast-ppi} &   4 & 2.558 & 0.7 & 5.9 & $\approx 36$ & 51.7 & 5.7 && \\
{\sc homo-pi}     &   2 & 0.612 & 41.6 & 55.5 & 2.9 &&&& \\ \hline

{\sc as-graph-1} &   2 & 0.887 & 19.6 & 72.2 & 8.2 &&&& \\
{\sc as-graph-2} &   2 & 0.833 & 25.7 & 65.3 & $\approx 9$ &&&& \\
{\sc as-graph-3} &   2 & 0.312 & 70.4 & 28 & 1.6 &&&&\\
{\sc routeview}    &   2 & 0.329 & 69.7 & 27.6 & 2.7 &&&& \\
{\sc as-caida}     &   0 & 0 & 100 &&&&&&   \\
{\sc itdk}  &   5 & 2.702 & 0.3 & 3.4 & 28.6 & 61.4 & 6.3 & $\approx 0$ &   \\ \hline

{\sc gnutella-06} &   5 & 3.543 & 0.01 & 0.7 & 5.9 & 37.2 & 50.9 & 5.3 &  \\
{\sc gnutella-24} &   6 & 4.475 & 0.02 & 0.1 & 0.7 & 8.6 & 38.3 & 46.5 & 5.7 \\
{\sc gnutella-30} &   6 & 4.034 & 0.02 & 0.2 & 2.6 & 16.4 & 54.8 & 25.1 & 0.5 \\
{\sc gnutella-31} &   6 & 4.251 & 0.01 & 0.1 & 1.3 & 11.6 & 48.4 & 37.8 & 0.9 \\ \hline

{\sc web-stanford}&   28 & 0.006 & 99.9 & $\approx 0$ & $\approx 0$ & 0.04 & $\approx 0$ & 0.02 & $\approx 0$  \\
{\sc web-notredam}&   2 &  0.935 & 7.1 & 92.3 & 0.6 &&&&\\
{\sc web-berkstan}&   22 & 0.002 & 99.97 & $\approx 0$ & 0 & 0.02 & 0 & 0.01 & $\approx 0$\\ \hline

{\sc amazon-1}    &   7 & 0.919 & 49.7 & 21.7 & 18.6 & 8.1 & 1.1 & 0.4 & 0.3 \\
{\sc amazon-2}    &   5 & 2.03 & 1.2 & 15.1 & 65 & 17.1 & 1.6 & $\approx 0$ & \\ \hline

{\sc road-euro}  &   8 & 0.135 & 97.4 & 0.3 & 0.1 & 0.4 & 0 & 0.8 & $\approx 1$ \\
{\sc openflight} &   2 & 0.641 & 36.1 & 63.7 & 0.2 &&&& \\
{\sc power-grid} &   4 & 1.409 & 46.3 & 13.1 & 12.6 & 9.1 & 18.8 &&   \\
{\sc road-pa} &   80 & 22.545 & 0.7 & 20.9 & 0.3 & 0.2 & 0.4 & 0.2 & 77.3 \\
\hline
\end{tabular} 
 \par } \medskip
\caption{Distribution  of values $k(x) = ecc_{T_2}(x) - ecc_G(x)$, $x \in V$.  $k_{max} := \max_{x\in V} k(x)$.  $k_{avg} := \frac{1}{n} \sum_{x \in V} k(x)$.}
\label{table:k-distributions-method2}
\end{center}
\commentout{
\vspace{0.5ex}
\raggedright Complete distribution: \\
{\sc web-stanford} $|k_3| = 106$, $|k_5| = 59$, and $|k_{28}| = 17$.  \\

{\sc web-BerkStan} $|k_3| = 107$, $|k_5| = 60$, and $|k_{22}| = 23$.
}
\end{table}

\commentout{
\newpage

\begin{figure}
\begin{center}

\tabcolsep=0.11cm
\begin{tabular}{cccc}

&&&    \\
&& {\sc duch-elite} network \\

$\delta$ & $k_{max}(\delta)$ & $k_{avg}(\delta)$ & \%  of pairs\\
\hline
0 &   22  & 7.031    & $ \times 10^{-4}$ \% \\
1 &    22  &  7.031  & $ \times 10^{-4}$ \%\\
2 &    20 &  6.275  & $ \times 10^{-4}$ \% \\
3 &    20 &  6.275  & $ \times 10^{-5}$ \% \\
4 &   18   &  2.688 & $ \times 10^{-5}$ \% \\
5 &    18  &  2.688 & $ \times 10^{-3}$ \% \\
6 &    14  &  0.783 & $ \times 10^{-4}$ \% \\
7 &    14  &  0.783 & $ \times 10^{-4}$ \% \\
8 &    10 &  0.539 & $ \times 10^{-4}$ \% \\
9 &    10 &  0.539 & $ \times 10^{-4}$ \%  \\
10 &   8  &  0.527 & $ \times 10^{-4}$ \%  \\
11 &    8 &  0.527 & $ \times 10^{-4}$ \%  \\
12 &   8 &  0.526 & $ \times 10^{-4}$ \%  \\
13 &   8 &  0.526 & $ \times 10^{-4}$ \% \\
14 &   8 &  0.526 & $ \times 10^{-4}$ \%  \\
15 &   8 &  0.526 & $ \times 10^{-4}$ \%  \\
16 &   8 &  0.526 & $ \times 10^{-4}$ \% \\
17 &   8 &  0.526 & $ \times 10^{-4}$ \%  \\
18 &   8 &  0.526 & $ \times 10^{-4}$ \%  \\
19 &   8 &  0.526 & $ \times 10^{-4}$ \%  \\
20 &   8 &  0.526 & $ \times 10^{-4}$ \%  \\
21 &   8 &  0.526 & $ \times 10^{-4}$ \%  \\
22 &   8 &  0.526 & $ \times 10^{-4}$ \%  \\ \hline
& &\\
&& (a) &
\end{tabular}
\quad
\begin{tabular}{cccc}

&&&    \\
&& {\sc sc-ppi} network \\

$\delta$ & $k_{max}(\delta)$ & $k_{avg}(\delta)$ & \%  of pairs\\
\hline
0 &  14   & 2.528 & \ $2.8 \times 10^{-4}$ \% \\
1 &   14   & 2.419 & $2.8 \times 10^{-4}$ \%\\
2 &   13  & 1.751  & $1.9 \times 10^{-4}$ \% \\
3 &   12  & 1.284 & $9.5 \times 10^{-5}$ \% \\
4 &   10 & 0.759 & $9.5 \times 10^{-5}$ \% \\
5 &    8  & 0.445 & $2.4 \times 10^{-3}$ \% \\
6 &    7  & 0.335 & $3.8 \times 10^{-4}$ \% \\
7 &    6  & 0.291 & $9.9 \times 10^{-4}$ \% \\
8 &    6 & 0.278 & $3.3 \times 10^{-4}$ \% \\
9 &    6 & 0.272 & $3.3 \times 10^{-4}$ \%  \\
10 &   6  & 0.271 & $3.3 \times 10^{-4}$ \%  \\
11 &    6 & 0.271 & $3.3 \times 10^{-4}$ \%  \\
12 &     6 & 0.271 & $3.3 \times 10^{-4}$ \%  \\
13 &     6 & 0.271 & $3.3 \times 10^{-4}$ \% \\
14 &     6 & 0.271 & $3.3 \times 10^{-4}$ \%  \\
15 &     6 & 0.271 & $3.3 \times 10^{-4}$ \%  \\
16 &     6 & 0.271 & $3.3 \times 10^{-4}$ \% \\
17 &     6 & 0.271 & $3.3 \times 10^{-4}$ \%  \\
18 &     6 & 0.271 & $3.3 \times 10^{-4}$ \%  \\
19 &     6 & 0.271 & $3.3 \times 10^{-4}$ \%   \\ \hline
& &\\
&& (b) &
\end{tabular}
\quad

\begin{tabular}{cccc}

&&&\\
&& {\sc routeview} network \\

$\delta$ & $k_{max}(\delta)$ & $k_{avg}(\delta)$ & \%  of pairs\\
\hline
0 &  8  & 1.322    & $1.7 \times 10^{-5}$ \\
1 &   7  & 0.521    & $1.4 \times 10^{-5}$\\
2 &  6  & 0.097    & $1.8 \times 10^{-6}$ \\
3 &  5  & 0.0523  & $1.8 \times 10^{-6}$ \\
4 &  4  &  0.038   & $7.2 \times 10^{-6}$  \\Brandes
5 &  4  &  0.0367 & $6.3 \times 10^{-6}$  \\
6 &  4  &  0.0367 &  $6.3 \times 10^{-6}$ \\
7 &  4  &  0.0367 & $6.3 \times 10^{-6}$\\
8 &  4  &  0.0367 & $6.3 \times 10^{-6}$ \\
9 &  4  &  0.0367 & $6.3 \times 10^{-6}$ \\
10 & 4 &  0.0367 & $6.3 \times 10^{-6}$ \\ \hline
& &\\
& & (c) &
\end{tabular}
\caption{Distribution of the values of $k_{max}$ over the different $\delta$ values for three datasets (a) {\sc dutch-elite}; (b) {\sc sc-ppi}; (c) {\sc routeview}.}
\label{fig:distDistibution}
\end{center}
\end{figure}
}

\end{document}